\newif\ifextended
\newtheorem{definition}{Definition}
\newtheorem{theorem}{Theorem}
\newtheorem{lemma}{Lemma}
\newtheorem{example}{Example}
\newcommand{\toolname}{\textsc{Dynamite}\xspace}
\newcommand{\toolvar}{\toolname-{\sc Enum}\xspace}
\newcommand{\mitra}{\textsc{Mitra}\xspace}
\newcommand{\migrator}{\textsc{Migrator}\xspace}
\newcommand{\trinity}{\textsc{Trinity}\xspace}
\newcommand{\neo}{\textsc{Neo}\xspace}
\newcommand{\eirene}{\textsc{Eirene}\xspace}
\newcommand{\mweaver}{\textsc{MWeaver}\xspace}
\newcommand{\zaatar}{Zaatar\xspace}
\newcommand{\alps}{\textsc{Alps}\xspace}
\newcommand{\set}[1]{\{ #1 \}}
\newcommand{\denot}[1]{\llbracket #1 \rrbracket}
\newcommand{\dom}{\emph{dom}}
\newcommand{\hole}{\textbf{??}}
\newcommand{\colondash}{~\texttt{:-}~}
\newcommand{\tint}{\tt Int}
\newcommand{\tstring}{\tt String}
\newcommand{\name}{N}
\newcommand{\attr}{a}
\newcommand{\schema}{\mathcal{S}}
\newcommand{\inst}{\mathcal{D}}
\newcommand{\ex}{\mathcal{E}}
\newcommand{\ein}{\mathcal{I}}
\newcommand{\eout}{\mathcal{O}}
\newcommand{\prog}{\mathcal{P}}
\newcommand{\attrMap}{\Psi}
\newcommand{\sketch}{\Omega}
\newcommand{\cstr}{\Phi}
\newcommand{\model}{\sigma}
\newcommand{\mdp}{\varphi}
\newcommand{\mdpSet}{\Delta}
\newcommand{\worklist}{\mathcal{W}}
\newcommand{\visited}{\mathcal{V}}
\newcommand{\attribs}{\emph{PrimAttrbs}}
\newcommand{\rel}{\mathcal{R}}
\newcommand{\irule}[2]%
   {\mkern-2mu\displaystyle\frac{#1}{\vphantom{,}#2}\mkern-2mu}
\newcommand{\irulelabel}[3]
{
\mkern-2mu
\begin{array}{ll}
\displaystyle\frac{#1}{\vphantom{,}#2} & \!\!\!\!#3
\end{array}
\mkern-2mu
}
\colorlet{punct}{red!60!black}
\definecolor{delim}{RGB}{20,105,176}
\colorlet{numb}{magenta!60!black}
\lstdefinelanguage{json}{
    basicstyle=\normalfont\ttfamily,
    numbersep=8pt,
    showstringspaces=false,
    string=[s]{"}{"},
    literate=
     *{0}{{{\color{numb}0}}}{1}
      {1}{{{\color{numb}1}}}{1}
      {2}{{{\color{numb}2}}}{1}
      {3}{{{\color{numb}3}}}{1}
      {4}{{{\color{numb}4}}}{1}
      {5}{{{\color{numb}5}}}{1}
      {6}{{{\color{numb}6}}}{1}
      {7}{{{\color{numb}7}}}{1}
      {8}{{{\color{numb}8}}}{1}
      {9}{{{\color{numb}9}}}{1}
      {:}{{{\color{punct}{:}}}}{1}
      {,}{{{\color{punct}{,}}}}{1}
      {\{}{{{\color{delim}{\{}}}}{1}
      {\}}{{{\color{delim}{\}}}}}{1}
      {[}{{{\color{delim}{[}}}}{1}
      {]}{{{\color{delim}{]}}}}{1},
}
\begin{document}


\title{Data Migration using Datalog Program Synthesis}


%

\numberofauthors{3}

\author{
\alignauthor
Yuepeng Wang \\
\affaddr{University of Texas at Austin}
\email{ypwang@cs.utexas.edu}
\alignauthor
Rushi Shah \\
\affaddr{University of Texas at Austin}
\email{rshah@cs.utexas.edu}
\alignauthor
Abby Criswell \\
\affaddr{University of Texas at Austin}
\email{abby@cs.utexas.edu}
\and
\alignauthor
Rong Pan \\
\affaddr{University of Texas at Austin}
\email{rpan@cs.utexas.edu}
\alignauthor
Isil Dillig \\
\affaddr{University of Texas at Austin}
\email{isil@cs.utexas.edu}
}

\maketitle

\begin{abstract}
This paper presents a new technique for migrating data between different schemas. Our method expresses the schema mapping as a Datalog program and automatically synthesizes a Datalog program from simple input-output examples to perform data migration. This approach  can  transform data between different types of schemas (e.g., relational-to-graph, document-to-relational) and performs  synthesis efficiently by leveraging the semantics of Datalog. We implement the proposed technique as a  tool called \toolname and show its effectiveness by evaluating \toolname  on 28 realistic data migration scenarios.
\end{abstract}

\section{Introduction} \label{sec:intro}

A prevalent task in today's ``big data era'' is the need to transform data stored in a source schema to a different target schema. For example, 
this task arises frequently when parties need to exchange or integrate data that are stored in different formats. In addition, 
as the needs of businesses evolve over time, it may become necessary to change the schema of the underlying database or move to a different type of database altogether. For instance, there are several real-world scenarios that necessitate shifting from a relational database to a non-SQL database or vice versa.

In this paper, we present a new programming-by-example technique for automatically migrating data from one schema to another. Given a small input-output example illustrating the source and target data, our method automatically synthesizes a program that  transforms data in the source format to its corresponding target format. Furthermore, unlike prior programming-by-example efforts in this space~\cite{glav-sigmod11, mweaver-sigmod12, mitra-vldb18}, our method can  transform data between several  types of database schemas, such as from a graph database to a relational one or  from a SQL database to a JSON document.

One of the key ideas underlying our method is to reduce the automated data migration problem to that of synthesizing a Datalog program from examples. Inspired by the similarity between Datalog rules and popular schema mapping formalisms, such as GLAV~\cite{glav-sigmod11, Fagin-tcs05} and tuple-generating dependencies~\cite{clio-vldb02},  our method expresses the correspondence between the source and target schemas  as a Datalog program in which extensional relations define the source schema and intensional relations represent the target. Then, given an input-output example $(\ein, \eout)$, finding a suitable schema mapping boils down to inferring a Datalog program $\prog$ such that $(\ein, \eout)$ is a model of $\prog$. Furthermore, because a Datalog program is executable, we can automate the  data migration task by simply executing the synthesized Datalog program $\prog$ on the source instance.

\begin{figure}[!t]
\centering
\includegraphics[scale=0.52]{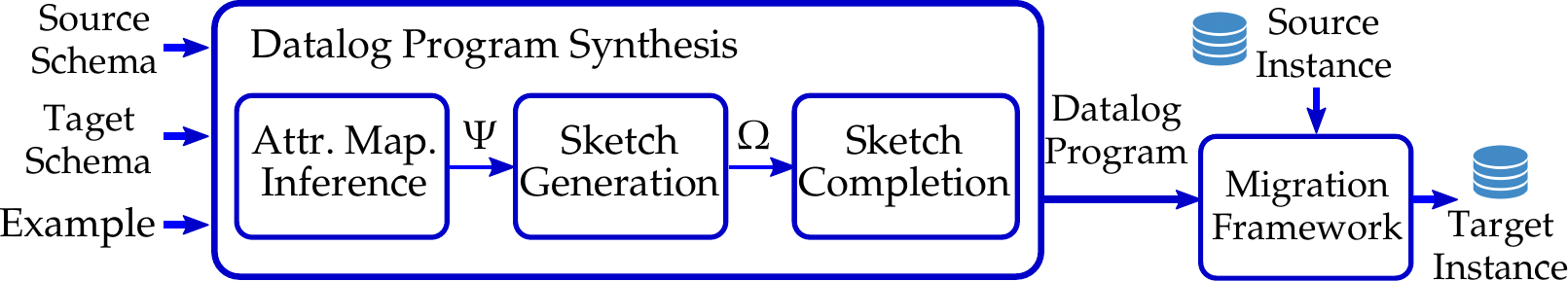}
\vspace{3pt}
\caption{Schematic workflow of \toolname.}
\vspace{-15pt}
\label{fig:workflow}
\end{figure}

While we have found Datalog programs to be a natural fit for expressing data migration tasks that arise in practice,  \emph{automating} Datalog program synthesis turns out to be a challenging task for several reasons: First, without some a-priori knowledge about the underlying schema mapping, it is unclear what the structure of the Datalog program would look like. Second, even if we ``fix'' the general structure of the Datalog rules, the  search space over all possible Datalog programs is still very large. Our method deals with these challenges by employing a practical  algorithm that leverages both the  semantics of Datalog programs as well as our target application domain.
As shown schematically in Figure~\ref{fig:workflow}, our proposed synthesis algorithm consists of three steps:

\vspace{5pt}
\noindent \emph{\textbf{Attribute mapping inference.}}
The first step of our approach is to infer an \emph{attribute mapping} $\attrMap$ which  maps each attribute in the source schema to a \emph{set} of  attributes that it {may}  correspond to.  While this attribute mapping does not uniquely define how to transform the source database to the target one, it substantially constrains the space of possible Datalog programs that we need to consider.

\vspace{5pt}
\noindent \emph{\textbf{Sketch generation.}}
In the next step, our method leverages the inferred attribute mapping $\attrMap$ to express the search space of all possible schema mappings as a \emph{Datalog program sketch} where some of the arguments of the extensional relations are unknown. While such a sketch represents a \emph{finite}  search space,  this space is exponentially large, making it infeasible to naively enumerate all programs defined by the sketch.

\vspace{5pt}
\noindent \emph{\textbf{Sketch completion.}}
The  final and most crucial ingredient of our method is the \emph{sketch completion} step that 
performs Datalog-specific deductive reasoning to 
dramatically prune the search space. Specifically, given a Datalog program  that does not satisfy the input-output examples, our method performs logical inference to rule out \emph{many other} Datalog programs from the search space. In particular, our method leverages a semantics-preserving transformation  as well as a new concept called \emph{minimal distinguishing projection} (MDP) to generalize from one incorrect Datalog program to many others.


\vspace{5pt}
\noindent \emph{\textbf{Results.}}
We have implemented our proposed technique in a prototype called \toolname and evaluate it on $28$ data migration tasks between real-world data-sets. These tasks involve transformations between different types of source and target schemas, including  relational, document, and graph databases. Our experimental results show that \toolname can successfully automate all of these tasks using small input-output examples that consist of just a few records. Furthermore, our method performs synthesis quite fast (with an average of $7.3$ seconds per benchmark) and can be used to migrate real-world database instances to the target schema in an average of $12.7$ minutes per database.

\vspace{5pt}
\noindent \emph{\textbf{Contributions.}}
The contributions of this paper include:
\vspace{-5pt}
\begin{itemize}[leftmargin=*]\itemsep-.5em
\item a  formulation of the automated data migration problem in terms of Datalog program synthesis;
\item a new algorithm for synthesizing  Datalog programs;
\item an implementation of our technique in a tool called \toolname and experimental evaluation  on $28$ data exchange tasks between real-world data-sets.
\end{itemize}

\section{Overview} \label{sec:overview}

In this section, we give a high-level overview of our method using a simple motivating example. Specifically, consider a document database with  the following schema:
\vspace{-5pt}
\begin{lstlisting}[xleftmargin=10pt,language=json,basicstyle=\normalfont\ttfamily\small]
Univ: [{ id: Int, name: String,
         Admit: [{uid: Int, count: Int}] }]
\end{lstlisting}
\vspace{-5pt}
This  database stores  a list of  universities, where each university has its own id, name, and graduate school admission information. Specifically, the admission information consists of a university identifier and the number of undergraduate students admitted from that university. 

Now, suppose that we need to transform this data to the following alternative schema:

\vspace{-5pt}
\begin{lstlisting}[language=json,basicstyle=\normalfont\ttfamily\small]
Admission:[{grad: String, ug: String, num: Int}]
\end{lstlisting}
\vspace{-3pt}

This new schema stores admission information as tuples consisting of a graduate school {\tt grad}, an undergraduate school {\tt ug}, and an integer {\tt num} that indicates the number of undergraduates from {\tt ug} that went to graduate school at {\tt grad}. As an example, Figure \ref{fig:example-univ}(a) shows a small subset of the data in the source schema, and \ref{fig:example-univ}(b) shows its corresponding representation in the target schema.

For this example, the desired transformation from the source to the target schema can be represented using the following simple Datalog program:
\vspace{-4pt}
\[
\begin{array}{l}
\emph{Admission}(\emph{grad}, \emph{ug}, \emph{num}) \colondash \\
\qquad \emph{Univ}(\emph{id}_1, \emph{grad}, v_1), \emph{Admit}(v_1, \emph{id}_2, \emph{num}), \emph{Univ}(\emph{id}_2, \emph{ug}, \_). \\
\end{array}
\vspace{-2pt}
\]
Here, the relation \emph{Univ} corresponds to a university entity in the source schema, and the relation \emph{Admit} denotes its nested \emph{Admit} attribute.  In the body of the Datalog rule, the third argument of the first \emph{Univ} occurrence has the same first argument as \emph{Admit}; this indicates that $(\emph{id}_2, \emph{num})$ is nested inside the university entry $(\emph{id}_1, \emph{grad})$. Essentially, this Datalog rule says the following: ``If there exists a pair of universities with identifiers $\emph{id}_1, \emph{id}_2$ and names $\emph{grad}, \emph{ug}$ in the source document, and if $(\emph{id}_2, num)$ is a nested attribute of $\emph{id}_1$, then there should be an \emph{Admission} entry $(\emph{grad}, \emph{ug}, \emph{num})$ in the target database.''

In what follows, we  explain how \toolname synthesizes the above Datalog program  given just the source and target schemas and the input-output example from Figure \ref{fig:example-univ}.

\begin{figure}[!t]
\centering
\begin{subfigure}[t]{0.23\textwidth}
\centering
\begin{lstlisting}[xleftmargin=20pt,language=json,basicstyle=\normalfont\ttfamily\scriptsize]
Univ: [
 {id:1, name:"U1",
  Admit: [
   {uid:1, count:10},
   {uid:2, count:50}]},
 {id:2, name:"U2",
  Admit: [
   {uid:2, count:20},
   {uid:1, count:40}]}
]
\end{lstlisting}
\vspace{-10pt}
\caption{Input Documents}
\end{subfigure}
\begin{subfigure}[t]{0.23\textwidth}
\centering
\begin{lstlisting}[xleftmargin=20pt,language=json,basicstyle=\normalfont\ttfamily\scriptsize]
Admission: [
 {grad:"U1",
  ug:"U1", num:10},
 {grad:"U1",
  ug:"U2", num:50},
 {grad:"U2",
  ug:"U2", num:20},
 {grad:"U2",
  ug:"U1", num:40}
]
\end{lstlisting}
\vspace{-10pt}
\caption{Output Documents}
\end{subfigure}
\vspace{5pt}
\caption{Example database instances.}
\vspace{-15pt}
\label{fig:example-univ}
\end{figure}

\vspace{5pt}
\noindent \textbf{Attribute Mapping.} As mentioned in Section~\ref{sec:intro}, our method starts by inferring an \emph{attribute mapping} $\attrMap$, which specifies which attribute in the source schema \emph{may} correspond to which other attributes (either in the source or target). For instance, based on the  example provided in Figure~\ref{fig:example-univ},  \toolname infers the following attribute mapping $\attrMap$:
\vspace{-6pt}
\[
\begin{array}{rclrcl}
\emph{id} & \rightarrow & \set{\emph{uid}} &
\qquad \emph{name} & \rightarrow & \set{\emph{grad}, \emph{ug}} \\
\emph{uid} & \rightarrow & \set{\emph{id}} &
\qquad \emph{count} & \rightarrow & \set{\emph{num}} \\
\end{array}
\vspace{-6pt}
\]
Since the values stored in the \emph{name} attribute of \emph{Univ} in the source schema are the same as the values stored in the \emph{grad} and \emph{ug} attributes of the target schema, $\attrMap$ maps source attribute \emph{name} to \emph{both} target attributes \emph{grad} and \emph{ug}. Observe that our inferred attribute mapping can also map source attributes to other source attributes. For example, since the values in the \emph{id} field of \emph{Univ} are the same as the values stored in the nested \emph{uid} attribute, $\attrMap$ also maps $\emph{id}$ to $\emph{uid}$ and vice versa.

\vspace{5pt}
\noindent \textbf{Sketch Generation.}
In the next step, \toolname uses the inferred attribute mapping $\attrMap$ to generate a program sketch $\sketch$ that defines the search space over all possible Datalog programs that we need to consider. Towards this goal, we introduce an extensional (resp. intensional) relation for each document in the source (resp. target) schema, including relations for nested documents.  In this case, there is a single intensional relation $\emph{Admission}$ for the target schema; thus, we introduce the following single Datalog rule sketch with the \emph{Admission} relation as its head:
\begin{equation}
\begin{array}{l}
\emph{Admission}(grad, ug, num) \colondash \\
\qquad \emph{Univ}(\hole_1, \hole_2, v_1), \emph{Admit}(v_1, \hole_3, \hole_4), \\
\qquad \emph{Univ}(\hole_5, \hole_6, \_), \emph{Univ}(\hole_7, \hole_8, \_).
\end{array}
\label{eq:sketch}
\end{equation}
\vspace{-5pt}
\[
\begin{array}{c}
\hole_1, \hole_3, \hole_5, \hole_7  \in  \set{\emph{id}_1, \emph{id}_2, \emph{id}_3, \emph{uid}_1}, \ \ \hole_4  \in  \set{\emph{num}, \emph{count}_1} \\
\hole_2, \hole_6, \hole_8  \in  \set{grad, ug, \emph{name}_1, \emph{name}_2, \emph{name}_3} \\
\end{array}
\]
Here, $\hole_i$ represents a \emph{hole} (i.e., unknown) in the sketch, and its domain is indicated as $\hole_i \in \set{e_1, \ldots, e_n}$,  meaning that hole $\hole_i$ can be instantiated with an element drawn from $\set{e_1, \ldots, e_n}$. To see where this sketch is coming from, we make the following observations:
\vspace{-5pt}
\begin{itemize}[leftmargin=*]\itemsep0em
\item According to $\attrMap$, the \emph{grad} attribute in the target schema comes from the \emph{name} attribute of \emph{Univ}; thus, we must have an occurrence of \emph{Univ} in the rule body.
\item Similarly, the  \emph{ug} attribute in the target schema comes from the \emph{name} attribute of \emph{Univ} in the source; thus, we may need another  occurrence of \emph{Univ} in the body.
\item Since the \emph{num} attribute  comes from the \emph{count} attribute in the nested \emph{Admit} document, the body of the Datalog rule contains $\emph{Univ}(\hole_1, \hole_2, v_1), \emph{Admit}(v_1, \hole_3, \hole_4)$
denoting  an \emph{Admit} document stored inside \emph{some} \emph{Univ} entity (the nesting relation is indicated through variable $v_1$).
\item The domain of each hole is determined by $\attrMap$ and the number of occurrences of each relation in the Datalog sketch. For example, since there are three occurrences of \emph{Univ}, we have three variables $\emph{id}_1, \emph{id}_2, \emph{id}_3$ associated with the \emph{id} attribute of  \emph{Univ}. The domain of hole $\hole_1$ is given by  $\{ \emph{id}_1, \emph{id}_2, \emph{id}_3, \emph{uid}_1 \}$ because it refers to the \emph{id} attribute of \emph{Univ}, and \emph{id} may be an ``alias'' of $\emph{uid}$ according to $\Psi$.
\end{itemize}

\vspace{5pt}
\noindent \textbf{Sketch Completion.} While the Datalog program sketch $\sketch$ given above looks quite simple, it actually has $64,000$ possible completions; thus,  a brute-force enumeration strategy is intractable. To solve this problem, \toolname utilizes a novel sketch completion algorithm that aims to learn from failed synthesis attempts.  Towards this goal, we encode all possible completions of sketch $\sketch$ as a satisfiability-modulo-theory (SMT) constraint $\cstr$ where each model of $\cstr$ corresponds to a possible completion of $\sketch$.  For the sketch from Equation~\ref{eq:sketch}, our SMT encoding is the following formula $\cstr$:
\vspace{-5pt}
\[
\begin{array}{rl}
& (x_1 = \emph{id}_1 \lor x_1 = \emph{id}_2 \lor x_1 = \emph{id}_3 \lor x_1 = \emph{uid}_1) \\
\land & (x_2 = \emph{grad} \lor x_2 = \emph{ug} \lor x_2 = \emph{name}_1 \lor \ldots \lor x_2 = \emph{name}_3) \\
\land & \ldots ~ \land (x_8 = \emph{grad} \lor x_8 = \emph{ug} \lor \ldots \lor x_8 = \emph{name}_3) \\
\end{array}
\vspace{-5pt}
\]
Here, for each hole $\hole_i$ in the sketch, we introduce a variable $x_i$ and stipulate that $x_i$ must be instantiated with exactly one of the elements in its domain.\footnote{In the SMT encoding, one should think of $\emph{id}_1, \emph{id}_2$ etc. as constants rather than variables.}  Furthermore, since Datalog  requires  all variables in the  head to occur in the rule body, we  also conjoin the following constraint with $\cstr$ to enforce this requirement:
\vspace{-5pt}
\[
(x_2 = \emph{grad} \lor x_6 = \emph{grad}) \land (x_2 = \emph{ug} \lor x_6 = \emph{ug}) \land (x_4 = \emph{num})
\vspace{-5pt}
\]
Next, we query the SMT solver for a model of this formula. In this case, one possible model $\model$ of $\cstr$ is:
\vspace{-5pt}
\begin{equation} \label{eq:model}
\begin{array}{rl}
& x_1 = \emph{id}_1 \land x_2 = \emph{grad} \land x_3 = \emph{id}_1 \land x_4 = \emph{num} \\
\land & x_5 = \emph{id}_1 \land x_6 = \emph{ug} \land x_7 = \emph{id}_2 \land x_8 = \emph{name}_1 \\
\end{array}
\vspace{-5pt}
\end{equation}
which corresponds to the following Datalog program $\prog$:
\vspace{-5pt}
\[
\begin{array}{c}
\emph{Admission}(\emph{grad}, \emph{ug}, \emph{num}) \colondash \ \ 
 \emph{Univ}(\emph{id}_1, \emph{grad}, v_1), \\ \emph{Admit}(v_1, \emph{id}_1, \emph{num}), \ \ 
 \emph{Univ}(\emph{id}_1, \emph{ug}, \_), \emph{Univ}(\emph{id}_2, \emph{name}_1, \_). 
\end{array}
\vspace{-5pt}
\]
\begin{figure}[!t]
\centering
\vspace{10pt}
\begin{subfigure}[t]{0.2\textwidth}
\centering
\scriptsize
\begin{tabular}{|c|c|c|}
\hline
grad & ug & num \\
\hline
U1 & U1 & 10 \\
\hline
U2 & U2 & 20 \\
\hline
\end{tabular}
\caption{Actual Result}
\end{subfigure}
~
\begin{subfigure}[t]{0.2\textwidth}
\centering
\scriptsize
\begin{tabular}{|c|c|c|}
\hline
grad & ug & num \\
\hline
U1 & U1 & 10 \\
\hline
U1 & U2 & 50 \\
\hline
U2 & U2 & 20 \\
\hline
U2 & U1 & 40 \\
\hline
\end{tabular}
\caption{Expected Result}
\end{subfigure}
\caption{Actual and expected results of program $\prog$.}
\label{fig:example-output}
\vspace{-15pt}
\end{figure}

However, this program does not satisfy the user-provided example because evaluating it on the input  yields a result that is different from the expected one (see Figure~\ref{fig:example-output}).

Now, in the next iteration, we want the SMT solver to return a model that corresponds to a different Datalog program. Towards this goal, one possibility would be to conjoin the negation of  $\model$ with our SMT encoding, but this would rule out just a \emph{single} program in our search space. To make synthesis more tractable, we instead {analyze} the root cause of failure and try to infer other Datalog programs that also do not satisfy the examples.

To achieve this goal, our sketch completion algorithm leverages two key insights: First, given a Datalog program $\prog$, we can obtain a \emph{set} of semantically equivalent Datalog programs by renaming variables in an equality-preserving way. Second, since our goal is to rule out incorrect (rather than just semantically equivalent) programs,  we can further enlarge the set of rejected Datalog programs by performing root-cause analysis. Specifically, we express the root cause of incorrectness as a 
 \emph{minimal distinguishing projection (MDP)}, which is a minimal set of attributes that distinguishes the expected output from the actual output. For instance, consider the actual and expected outputs $\eout$ and $\eout'$ shown in Figure~\ref{fig:example-output}. An MDP for this example is the singleton \emph{num} because taking the projection of $\eout$ and $\eout'$ on \emph{num} yields different results.



Using these two key insights, our sketch completion algorithm infers $720$ other Datalog programs that are guaranteed \emph{not} to satisfy the input-output example and represents them using the following SMT formula:
\vspace{-5pt}
\begin{equation} \label{eq:blocking1}
\begin{array}{c}
(x_4 = \emph{num} \land x_1 \neq x_2 \land x_1 = x_3 \land x_1 \neq x_4 \land x_1 = x_5 \\
\land x_1 \neq x_6 \land x_1 \neq x_7 \land x_1 \neq x_8 \land \cdots \land x_7 \neq x_8) \\
\end{array}
\vspace{-5pt}
\end{equation}
We can use the negation of this formula as a ``blocking clause'' by conjoining it with the SMT encoding and rule out many infeasible solutions at the same time.

After  repeatedly sampling models of the sketch encoding and adding blocking clauses as discussed above,  \toolname finally obtains the following model:
\vspace{-5pt}
\[
\begin{array}{c}
x_1 = \emph{id}_1 \land x_2 = \emph{grad} \land x_3 = \emph{id}_2 \land x_4 = \emph{num} \\
\land x_5 = \emph{id}_2 \land x_6 = \emph{ug} \land x_7 = \emph{id}_3 \land x_8 = \emph{name}_1 \\
\end{array}
\vspace{-5pt}
\]
which corresponds to the following Datalog program (after some basic simplification):
\vspace{-5pt}
\[
\begin{array}{l}
\emph{Admission}(\emph{grad}, \emph{ug}, \emph{num}) \colondash \\
\qquad \emph{Univ}(\emph{id}_1, \emph{grad}, v_1), \emph{Admit}(v_1, \emph{id}_2, \emph{num}), \emph{Univ}(\emph{id}_2, \emph{ug}, \_). \\
\end{array}
\vspace{-5pt}
\]
This  program is  consistent with the provided examples and can  automate the desired data migration task.

\section{Preliminaries} \label{sec:prelim}

In this section, we  review some preliminary information on Datalog and our schema representation; then, we  explain how to represent data migration programs in  Datalog.

\subsection{Schema Representation}

We represent database schemas using (non-recursive) record types, which are general enough to express a wide variety of database schemas, including XML and JSON documents and graph databases.
Specifically, 
a  schema $\schema$ is a mapping from type names $N$ to their definition:
\[
\begin{array}{rcl}
\emph{Schema} \ \schema &::=& N \rightarrow T \\
\emph{Type} \ T &::=& \tau \ | \ \{ N_1, \ldots, N_n \} \\
\end{array}
\]
A type definition is either  a primitive type $\tau$ or a set of named attributes $\{ N_1, \ldots N_k \}$, and the type of  attribute $N_i$ is given by the schema $\schema$.  An attribute $N$ is a \emph{primitive attribute} if $\schema(N) = \tau$ for some primitive type $\tau$. Given a schema $\schema$, we write $\attribs(\schema)$ to denote all primitive attributes in $\schema$, and we write $\emph{parent}(N) = N'$ if  $N \in \schema(N') $.

\vspace{-5pt}
\begin{example}
Consider  the JSON document schema from our motivating example in Section~\ref{sec:overview}:
\begin{lstlisting}[xleftmargin=10pt,language=json,basicstyle=\normalfont\ttfamily\scriptsize]
Univ: [{ id: Int, name: String,
         Admit: [{uid: Int, count: Int}] }]
\end{lstlisting}
In our representation, this schema is represented as follows:
\[
\small
\begin{array}{c}
\schema(\emph{Univ}) = \set{{\tt id}, {\tt name}, {\tt Admit}} \quad
\schema(\emph{Admit}) = \set{{\tt uid}, {\tt count}} \\
\schema(\emph{id}) = \schema(\emph{uid}) = \schema(\emph{count}) = \tint \quad
\schema(\emph{name}) = \tstring \\
\end{array}
\]
\end{example}

\begin{example}

Consider the following relational schema:
\[
{\tt User}({\tt id}: \tint,~ {\tt name}: \tstring,~ {\tt address}: \tstring)
\]
In our representation, this corresponds to the following schema:
\[
\small
\begin{array}{c}
\schema(\emph{User}) = \set{{\tt id}, {\tt name}, {\tt address}} \\
\schema(\emph{id}) = \tint \quad
\schema(\emph{name}) = \schema(\emph{address}) = \tstring \\
\end{array}
\]
\end{example}

\begin{example}
{
Consider the following graph schema:
}
\begin{figure}[!h]
\centering
\includegraphics[scale=0.5]{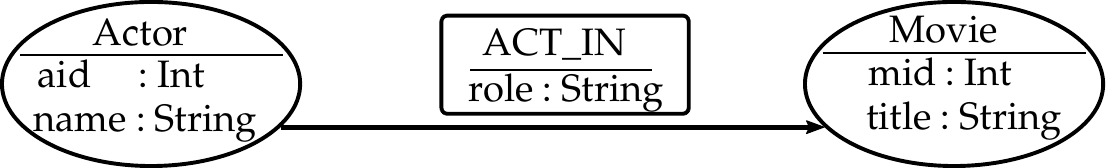}
\vspace{-10pt}
\end{figure}

To convert this schema to our representation, we first introduce two attributes {\tt source} and {\tt target} to denote the source and target nodes of the edge. Then, the graph schema corresponds the following mapping in our representation:
\[
\small
\begin{array}{c}
\schema({\tt Movie}) = \set{{\tt mid}, {\tt title}} \quad
\schema({\tt Actor}) = \set{{\tt aid}, {\tt name}} \\
\schema({\tt ACT\_IN}) = \set{{\tt source}, {\tt target}, {\tt role}} \\
\schema({\tt mid}) = \schema({\tt aid}) = \schema({\tt source}) = \schema({\tt target}) = \tint \\
\schema({\tt title}) = \schema({\tt name}) = \schema({\tt role}) = \tstring \\
\end{array}
\]
\end{example}

\subsection{Datalog} \label{sec:prelim-datalog}

\begin{figure}[!t]
\centering
\[
\begin{array}{rclrcl}
\emph{Program} &::=& \emph{Rule}^+ & \quad
\emph{Rule} &::=& \emph{Head} \colondash \emph{Body}. \\
\emph{Head} &::=& \emph{Pred} & \quad
\emph{Body} &::=& \emph{Pred}^+ \\
\emph{Pred} &::=& R(v^+) &
& & \hspace{-50pt} v \in \emph{Variable} \quad R \in \emph{Relation} \\
\end{array}
\]
\caption{Syntax of Datalog programs. }
\vspace{-10pt}
\label{fig:syntax-datalog}
\end{figure}

 As shown in Figure~\ref{fig:syntax-datalog}, a Datalog program consists of a list of rules, where each rule is of the form $H \colondash B.$ Here, $H$ is referred as the \emph{head} of the rule and $B$ is the \emph{body}. The head $H$ is a single relation of the form $R(v_1, \ldots, v_n)$, and the body $B$ is a collection of predicates $B_1, B_2, \ldots, B_n$. In the remainder of this paper, we sometimes also write
\[
    H_1, \ldots, H_m \colondash B_1, B_2, \ldots, B_n.
\]
as short-hand for $m$ Datalog rules with the same body.
Predicates that appear only in the body are known as extensional relations and correspond to known facts.  Predicates that appear in the head are called intensional relations and correspond to the output of the Datalog program.

\vspace{0.1in} \noindent
{\bf \emph{Semantics.}} 
The semantics of Datalog programs are typically given using  Herbrand models of first-order logic formulas~\cite{datalog-tkde89}. In particular, each Datalog rule $\mathcal{R}$ of the form $H(\vec{x}) \colondash B(\vec{x}, \vec{y})$ corresponds to a first-order formula $\denot{\mathcal{R}} =\forall \vec{x}, \vec{y}.~ B(\vec{x}, \vec{y}) \rightarrow H(\vec{x})$, and the semantics of the Datalog program can be expressed as the conjunction of each rule-level formula. Then, given a Datalog program $\prog$ and an input $\ein$  (i.e., a set of ground formulas), the output corresponds to the least Herbrand model of $\denot{\prog} \land \ein$.

\subsection{Data Migration using Datalog} \label{sec:prelim-migration}

We now  discuss how to perform  data migration using Datalog. The basic idea is as follows: First, given a source database instance $\inst$  over schema $\schema$, we  express $\inst$ as a collection of Datalog facts over extensional relations $\rel$. Then, we express the target schema $\schema'$ using intensional relations $\rel'$ and construct a set of (non-recursive) Datalog rules, one for each intensional relation in $\rel'$. Finally, we run this Datalog program and translate the resulting facts into the target database instance.
Since programs can be evaluated using an off-the-shelf Datalog solver, we only explain how to translate between database instances and Datalog facts.

\vspace{5pt}
\noindent \emph{\textbf{From instances to facts.}}
Given a database instance $\inst$ over schema $\schema$, we  introduce an extensional relation symbol $R_N$  for each record type with name $N$ in $\schema$ and assign a unique identifier $\emph{Id}(r)$ to every record $r$ in the database instance. Then, for each instance  $r = \set{a_1: v_1, \ldots, a_n: v_n}$ of  record type $N$, we generate a fact $R_N(c_0, c_1, \ldots, c_n)$ where:
\[
c_i =
\begin{cases}
\emph{Id}(\emph{parent}(r)),& \text{if } i = 0 \text{ and } $r$ \text{ is a nested record }\\
v_i,&\text{if $\schema(a_i)$ is a primitive type} \\
\emph{Id}(r),& \text{if $\schema(a_i)$ is a record type} \\
\end{cases}
\]

Intuitively, relation $R_N$ has an extra argument that keeps track of its parent record in the database instance if $N$ is nested in another record type. In this case, the first argument of $R_N$ denotes the unique identifier for the record in which it is nested. 
\vspace{-5pt}
\begin{example}
For the JSON document from  Figure~\ref{fig:example-univ}(a), our method  generates the following Datalog facts 
\vspace{-5pt}
\[
\small
\centering
\begin{array}{lll}
Univ(1, {\tt ``U1"}, id_1) & Univ(2, {\tt ``U2"}, id_2) &
Admit(id_1, 1, 10) \\ Admit(id_2, 2, 20) &
Admit(id_1, 2, 50) & Admit(id_2, 1, 40) \\
\end{array}
\vspace{-5pt}
\]
where $id_1$ and $id_2$ are unique identifiers.
\end{example}
\vspace{-5pt}
\noindent \emph{\textbf{From facts to instances.}}
We convert Datalog facts to the target database instance  using the inverse procedure. Specifically, given a Datalog fact $R_N(c_1, \ldots, c_n)$ for record type $N: \set{a_1, \ldots, a_n}$, we create a record instance using a function $\emph{BuildRecord}(R_N, N) = \set{a_1: v_1, \ldots, a_n: v_n}$ where
\[
v_i =
\begin{cases}
c_i,& \text{if $\schema(a_i)$ is a primitive type} \\
\emph{BuildRecord}(R_{a_i}, a_i),& \text{if $\schema(a_i)$ is a record type and} \\
& \text{the first argument of $R_{a_i}$ is $c_i$} \\
\end{cases}
\]

Observe that the \emph{BuildRecord} procedure builds the record  recursively by chasing parent identifiers into other relations.

\section{Datalog Program Synthesis} \label{sec:synthesis}

In this section, we describe our algorithm for automatically synthesizing Datalog programs from an input-output example $\ex = (\ein, \eout)$. Here, $\ein$ corresponds to an example of the database instance in the source schema, and $\eout$ demonstrates the desired target instance.  We  start by giving a high-level overview of the synthesis algorithm and then explain each of the key ingredients in more detail.

\subsection{Algorithm Overview} \label{sec:synth-top-level}

\begin{figure}[!t]
\begin{algorithm}[H]
\caption{Synthesizing Datalog programs}
\label{algo:synthesis}
\begin{algorithmic}[1]
\Procedure{\textsc{Synthesize}}{$\schema, \schema', \ex$}
\vspace{2pt}
\small
\Statex \textbf{Input:} Source schema $\schema$, target schema $\schema'$, \Statex \qquad \quad \ \ example $\ex = (\ein, \eout)$
\Statex \textbf{Output:} Datalog program $\prog$ or $\bot$ to indicate failure
\vspace{2pt}
\State $\attrMap \leftarrow \textsc{InferAttrMapping}(\schema, \schema', \ex)$;
\State $\sketch \leftarrow \textsc{SketchGen}(\attrMap, \schema, \schema')$;
\State $\cstr \leftarrow \textsc{Encode}(\sketch)$;
\While{\textsf{SAT}($\cstr$)}
    \State $\model \leftarrow \textsf{GetModel}(\cstr)$;
    \State $\prog \leftarrow \textsf{Instantiate}(\sketch, \model)$;
    \State $\eout' \leftarrow \denot{\prog}_{\ein}$;
    \If{$\eout' = \eout$} \Return $\prog$; \EndIf
    \State $\cstr \leftarrow \cstr \land \textsc{Analyze}(\model, \eout', \eout)$;
\EndWhile
\State \Return $\bot$;
\EndProcedure

\end{algorithmic}
\end{algorithm}
\vspace{-20pt}
\end{figure}

The top-level algorithm for synthesizing Datalog programs is summarized in Algorithm~\ref{algo:synthesis}. The \textsc{Synthesize} procedure takes as input a source schema $\schema$, a target schema $\schema'$, and an input-output example $\ex = (\ein, \eout)$. The return value is either  a Datalog program $\prog$ such that evaluating $\prog$ on $\ein$ yields $\eout$ (i.e. $\denot{\prog}_{\ein} = \eout$) or  $\bot$ to indicate that the desired data migration task cannot be represented as a Datalog program.

As shown in Algorithm~\ref{algo:synthesis}, the \textsc{Synthesize} procedure first invokes the \textsc{InferAttrMapping} procedure (line 2) to infer an attribute mapping $\attrMap$. Specifically, $\attrMap$ is a mapping from each  $a \in 
\attribs(\schema)$ to a set of attributes $\{ a_1, \ldots, a_n \}$ where $a_i \in \attribs(\schema) \cup  \attribs(\schema')$ such that:
\vspace{-5pt}
\[
a' \in \attrMap(a) \ \Leftrightarrow  \ \Pi_{a'}(\mathcal{D}) \subseteq \Pi_{a}(\ein)
\vspace{-5pt}
\]
where $\inst$ stands for either $\ein$ or $\eout$. Thus, \textsc{InferAttrMapping} is conservative and  maps a source attribute $a$ to another attribute $a'$ if the values contained in $a'$ are a subset of those contained in $a$.

Next, the algorithm invokes \textsc{SketchGen}  (line 3) to generate a Datalog program sketch $\sketch$ based on  $\attrMap$. As mentioned in Section~\ref{sec:overview}, a  sketch $\sketch$ is a Datalog program with unknown arguments in  the rule body, and the sketch also determines the domain for each unknown. Thus, if the sketch contains  $n$ unknowns, each with $k$ elements in its domain, then the sketch encodes a search space of $k^n$ possible  programs.

Lines 4-10 of the  \textsc{Synthesize}  algorithm perform lazy enumeration over possible sketch completions.   Given a sketch $\sketch$, we first generate an SMT formula $\cstr$ whose models correspond to  all possible completions of $\sketch$  (line 4).  Then, the loop in lines 5-10 repeatedly queries a model of $\cstr$ (line 6), tests if the corresponding Datalog program is consistent with the example (lines 7-9), and adds a blocking clause to $\cstr$ if it is not (line 10). The blocking clause is obtained via the call to the {\sc Analyze} procedure, which performs Datalog-specific deductive reasoning to infer a whole set of programs that are \emph{guaranteed not to} satisfy the examples.

In the remainder of this section, we explain the sketch generation and completion procedures in more detail.
\subsection{Sketch Generation} \label{sec:synth-sketch-gen}

Given an attribute mapping $\attrMap$, the goal of sketch generation is to construct the skeleton of the target Datalog program.
Our sketch language is similar to the Datalog syntax in Figure~\ref{fig:syntax-datalog}, except that it allows holes (denoted by $\hole$) as special constructs indicating unknown expressions.
As summarized in Algorithm~\ref{algo:sketch-gen}, the {\sc SketchGen} procedure iterates over each top-level record in the target schema and, for each record type, it generates a Datalog rule sketch  using the helper procedure {\sc GenRuleSketch}.
\footnote{
The property of the generated sketch is characterized and proved in\ifextended{
Appendix A.
}\else{
Appendix A of the extend version~\cite{extended-version}.
}\fi
}
Conceptually, {\sc GenRuleSketch} performs the following tasks: First, it generates a set of intensional predicates for each top-level record in the target schema (line 8). The intensional predicates do not contain any unknowns and only appear in the head of the Datalog rules.
     Next, the loop (lines 9--12) constructs the skeleton of each Datalog rule body  by generating extensional predicates for the relevant source record types. The extensional predicates do contain unknowns, and there can be multiple occurrences of a  relation symbol in the body.
     Finally, the loop in lines 13--18 generates the domain for each unknown used in the rule body.


\begin{figure}[!t]
\begin{algorithm}[H]
\caption{Generating Datalog program sketches}
\label{algo:sketch-gen}
\begin{algorithmic}[1]
\Procedure{\textsc{SketchGen}}{$\attrMap, \schema, \schema'$}
\vspace{2pt}
\small
\Statex \textbf{Input:} Attribute mapping $\attrMap$, source schema $\schema$,
\Statex \qquad \quad \ \ target schema $\schema'$
\Statex \textbf{Output:} Program sketch $\sketch$
\State $\sketch \leftarrow \emptyset$;
\For{\textbf{each} top-level record type $\name \in \schema'$}
    \State $R \leftarrow \textsc{GenRuleSketch}(\attrMap, \schema, \schema', N)$;
    \State $\sketch \leftarrow \sketch \cup \set{R}$;
\EndFor
\State \Return $\sketch$;
\EndProcedure
\vspace{5pt}

\Procedure{\textsc{GenRuleSketch}}{$\attrMap, \schema, \schema', \name$}
\vspace{2pt}
\small
\State $H \leftarrow \textsc{GenIntensionalPreds}(\schema', N)$; $B \leftarrow \emptyset$;
\vspace{2pt}
\For{\textbf{each} $\attr \in \dom(\attrMap)$}
    \ForRepeat{$| \set{\attr' ~|~ \attr' \in \emph{PrimAttrbs}(\name) \land \attr' \in \attrMap(\attr)} |$}
        \State $N \leftarrow \textsf{RecName}(\attr)$;
        \State $B \leftarrow B \cup \textsc{GenExtensionalPreds}(\schema, N)$;
    \EndFor
\EndFor
\vspace{2pt}
\For{\textbf{each} $\hole_a \in \emph{Holes}(B)$}
    \State $V \leftarrow \set{v_{a'} ~|~ a' \in \emph{PrimAttrbs}(\name) \land a' \in \attrMap(a)}$;
    \For{\textbf{each} $a' \in \attrMap(a) \cup \set{a}$ \textbf{and} $a' \in \emph{PrimAttrbs}(\schema)$}
        \State $n \leftarrow \textsf{CopyNum}(B, \textsf{RecName}(a'))$;
        \State $V \leftarrow V \cup \bigcup_{i=1}^{n} \set{v_{a'}^{i}}$;
    \EndFor
    \State $B \leftarrow B[\hole_a \mapsto \hole_a \in V]$;
\EndFor
\State \Return $H \colondash B.$;
\EndProcedure

\end{algorithmic}
\end{algorithm}
\vspace{-20pt}
\end{figure}

\noindent \textbf{\emph{Head generation.}} Given a top-level record type $N$ in the target schema, the procedure {\sc GenIntensionalPreds}  generates the head of the corresponding Datalog rule for $N$. If $N$ does not contain any nested records, then the head consists of a single predicate, but, in general, the head contains as many predicates as are (transitively) nested in $N$.

\begin{figure}[!t]
\centering
\[
\small
\hspace{-10pt}\begin{array}{c}
\irulelabel
{\schema'(N) \in \emph{PrimType}}
{\schema' \vdash N \leadsto (v_N,~ \emptyset)}
{\textrm{(InPrim)}} \\ \ \\

\irulelabel
{\begin{array}{c}
    \schema'(N) = \set{a_1, \ldots, a_n} \quad \emph{isNested}(N) \\
    \schema' \vdash a_i \leadsto (v_i, H_i) \quad i = 1, \ldots, n \\
\end{array}}
{\schema' \vdash N \leadsto (v_N,~ \set{R_N(v_N, v_1, \ldots, v_n)} \cup \bigcup_{i=1}^{n} H_i)}
{\textrm{(InRecNested)}} \\ \ \\

\irulelabel
{\begin{array}{c}
    \schema'(N) = \set{a_1, \ldots, a_n} \quad \neg \emph{isNested}(N) \\
    \schema' \vdash a_i \leadsto (v_i, H_i) \quad i = 1, \ldots, n \\
\end{array}}
{\schema' \vdash N \leadsto (\_,~ \set{R_N(v_1, \ldots, v_n)} \cup \bigcup_{i=1}^{n} H_i)}
{\textrm{(InRec)}}

\end{array}
\]
\caption{Inference rules describing {\sc GenIntensionalPreds}}
\label{fig:rules-intensional}
\vspace{-10pt}
\end{figure}

In more detail, Figure~\ref{fig:rules-intensional} presents the {\sc GenIntensionalPreds} procedure  as inference rules that derive judgments of the form $\schema' \vdash \name \leadsto (v, H)$ where $H$ corresponds to the head of the Datalog rule for record type $\name$. As expected, these rules are recursive and build the predicate set $H$ for $\name$ from those of its nested records. Specifically, given a top-level record $N$ with attributes $a_1, \ldots, a_n$, the rule \textrm{InRec} first generates predicates $H_i$ for each attribute $a_i$ and then introduces an additional relation $R_\name(v_1, \ldots, v_n)$  for $\name$ itself. Predicate generation for nested relations (rule \textrm{InRecNested}) is similar, but we introduce a new variable $v_N$ that is used for connecting $\name$ to its parent relation. The \textrm{InPrim} rule corresponds to the base case of  {\sc GenIntensionalPreds}  and generates variables for attributes of primitive type.

\vspace{5pt}
\noindent \textbf{\emph{Body sketch generation.}} We now consider sketch generation for the body of each Datalog rule (lines 9--12 in Algorithm~\ref{algo:sketch-gen}). Given a record type $N$ in the source schema and its corresponding predicate(s) $R_N$, the loop in lines 9--12 of Algorithm~\ref{algo:sketch-gen}  generates as many copies of $R_N$ in the rule body as there are head attributes that ``come from'' $R_N$ according to $\attrMap$. Specifically, Algorithm~\ref{algo:sketch-gen} invokes a procedure called {\sc GenExtensionalPreds}, described in Figure~\ref{fig:rules-extensional}, to generate each copy of the extensional predicate symbol.

Given a record type $N$ in the source schema,  \textsc{GenExtensionalPreds}  generates predicates up until the top-level record that contains $N$. The rules in Figure~\ref{fig:rules-extensional} are of the form $\schema \vdash N \hookrightarrow (h, B)$, where $B$ is the sketch body for record type $N$. The \textrm{ExPrim} rule is the base case to generate sketch holes for primitive attributes. Given a record $N$ with attributes $a_1, \ldots, a_n$ and its parent $N'$, the rule \textrm{ExRecNested} recursively generates the body predicates $B'$ for the parent record $N'$ and adds an additional predicate $R_N(v_N, h_1, \ldots, h_n)$ for $N$ itself. Here $v_N$ is a variable for connecting $N$ and its parent $N'$, and $h_i$ is the hole or variable for attribute $a_i$. In the case where $N$ is a top level record, the \textrm{ExRec} rule  generates a singleton predicate $R_N(h_1, \ldots, h_n)$.




\vspace{-5pt}
\begin{example} \label{ex:sketch-body}
Suppose we want to generate the body sketch for the rule associated with record type $T: \set{a': \tint,~ b': \tint}$ in the target schema. Also, suppose we are given the attribute mapping $\attrMap$ where $
\attrMap(a) = a'$ and  $\attrMap(b) = b'$ and source attributes $a, b$ belong to the following record type in the source schema:
$
C: \set{a: \tint,~ D: \set{b: \tint}}
$.
According to $\attrMap$, $a'$ comes from attribute $a$ of record type $C$ in the source schema, so we have a copy of $R_C$ in the sketch body. Based on the rules of Figure~\ref{fig:rules-extensional}, we generate predicate $R_C(\emph{\hole}_a, v_D^1)$, where $\emph{\hole}_a$ is the hole for attribute $a$ and $v_D^1$ is a fresh variable. Similarly, since  $b'$ comes from attribute $b$ of record type $D$, we generate predicates $R_C(\emph{\hole}_a, v_D^2)$ and $R_D(v_D^2, \emph{\hole}_b)$. Putting them together, we obtain the following sketch body:
\[
R_C(\emph{\hole}_a, v_D^1), R_C(\emph{\hole}_a, v_D^2), R_D(v_D^2, \emph{\hole}_b)
\]
\end{example}

\begin{figure}[!t]
\centering
\[
\small
\hspace{-10pt}\begin{array}{c}
\irulelabel
{\schema(N) \in \emph{PrimType}}
{\schema \vdash N \hookrightarrow (\hole_N,~ \_)}
{\textrm{(ExPrim)}} \\ \ \\

\irulelabel
{\begin{array}{c}
    \schema(N) = \set{a_1, \ldots, a_n} \quad \text{fresh } v_N \\
    \schema \vdash a_i \hookrightarrow (h_i, \_) \quad i = 1, \ldots, n \\
    N' = \emph{parent}(N) \quad \schema \vdash N' \hookrightarrow (\_, B') \\
\end{array}}
{\schema \vdash N \hookrightarrow (v_N,~ \set{R_N(v_N, h_1, \ldots, h_n)} \cup B')}
{\textrm{(ExRecNested)}} \\ \ \\

\irulelabel
{\begin{array}{c}
    \schema(N) = \set{a_1, \ldots, a_n} \quad \neg \emph{isNested}(N) \\
    \schema \vdash a_i \hookrightarrow (h_i, \_) \quad i = 1, \ldots, n \\
\end{array}}
{\schema \vdash N \hookrightarrow (\_,~ \set{R_N(h_1, \ldots, h_n)})}
{\textrm{(ExRec)}}

\end{array}
\]
\caption{Inference rules for {\sc GenExtensionalPreds}}
\vspace{-10pt}
\label{fig:rules-extensional}
\end{figure}

\vspace{-5pt}
\noindent \textbf{\emph{Domain generation.}}
Having constructed the  skeleton of the Datalog program, we still need to determine the set of variables that each hole in the sketch can be instantiated with. Towards this goal, the last part of  \textsc{GenRuleSketch}  (lines 13--18 in Algorithm~\ref{algo:sketch-gen}) constructs the domain $V$ for each hole as follows: First, for each attribute $a$ of source relation $R_N$, we introduce as many variables $v_a^1, \ldots, v_a^k$ as there are copies of $R_N$. Next, for the purposes of this discussion, let us say that attributes $a$ and $b$ ``alias'' each other if $b \in \attrMap(a)$ or vice versa. Then, given a hole $\hole_x$ associated with attribute $x$, the domain of $\hole_x$ consists of all the variables associated with attribute $x$ or one of its aliases.
\vspace{-5pt}
\begin{example}
Consider the same schemas and attribute mapping from Example~\ref{ex:sketch-body} and the following body sketch:
\[
R_C(\emph{\hole}_a, v_D^1), R_C(\emph{\hole}_a, v_D^2), R_D(v_D^2, \emph{\hole}_b)
\]
Here, we have $\emph{\hole}_a \in \set{v_{a'}, v_a^1, v_a^2}$ and  $\emph{\hole}_b \in \set{v_{b'}, v_b^1}$.
\end{example}


\subsection{Sketch Completion}\label{sec:sketch-completion}
While the sketch generated by Algorithm~\ref{algo:sketch-gen} defines a finite search space of Datalog programs, this search space is still exponentially large. Thus, rather than performing naive brute-force enumeration, our sketch completion algorithm combines enumerative search with Datalog-specific deductive reasoning to learn from failed synthesis attempts. 
 As explained in Section~\ref{sec:synth-top-level}, the basic idea is to generate an SMT encoding of all possible sketch completions and then iteratively add blocking clauses to  rule out incorrect Datalog programs. In the remainder of this section, we discuss how to generate the initial SMT encoding as well as the {\sc Analyze} procedure for generating useful blocking clauses. 

\vspace{5pt}
\noindent \emph{\textbf{Sketch encoding.}}
Given a Datalog program sketch $\sketch$, our initial SMT encoding is constructed as follows: First, for each  hole $\hole_i$ in the sketch, we introduce an integer variable $x_i$, and for every variable $v_j$ in the domain of some hole, we introduce a unique§ integer constant denoted as \emph{Const}($v_j$). Then, our SMT encoding  stipulates the following constraints to enforce that the sketch completion is well-formed:
\vspace{-5pt}
\begin{itemize}[leftmargin=*]\itemsep0em
\item
{\bf \emph{Every hole must be instantiated:}}  For each hole of the form $\hole_i \in \set{v_1, \ldots, v_n}$, we add a constraint
\[
\vspace{-0.1in}
    \bigvee\limits_{j=1}^{n} x_i = \emph{Const}(v_j)
\vspace{-3pt}
\]
\item {\bf \emph{Head variables must appear in the body.} } In a well-formed Datalog program, every head variable must appear in the body. Thus, for each head variable $v$,
 we add:
\[
    \bigvee\limits_{i} x_i = \emph{Const}(v) \text{ where } v \text{ is in the domain of } \hole_i
\]
\end{itemize}
\vspace{-8pt}
Since there is a one-to-one mapping between integer constants in the SMT encoding and sketch variables, each model of the SMT formula corresponds to a  Datalog program.

\vspace{5pt}
\noindent \emph{\textbf{Adding blocking clauses.}} Given a Datalog program $\prog$ that does not satisfy the examples $(\ein, \eout)$, our top-level synthesis procedure (Algorithm~\ref{algo:synthesis}) invokes a function called {\sc Analyze} to find useful blocking clauses to add to the SMT encoding. This  procedure is summarized in Algorithm~\ref{algo:analyze} and  is built on two key insights. The first key insight is that  the semantics of a Datalog program is unchanged under an equality-preserving renaming of variables:

\vspace{-5pt}
\begin{theorem} \label{thm:substitution}
Let $\prog$ be a Datalog program over variables $X$ and let $\hat{\sigma}$ be an injective substitution from $X$ to another set of variables $Y$. Then, we have $\prog \simeq \prog \hat{\sigma}$.
\end{theorem}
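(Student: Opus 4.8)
The plan is to prove the statement $\prog \simeq \prog\hat\sigma$ by showing that the two programs compute the same output on every input, which by the Herbrand-model semantics recalled in Section~\ref{sec:prelim-datalog} amounts to showing that $\denot{\prog} \land \ein$ and $\denot{\prog\hat\sigma} \land \ein$ have the same least Herbrand model for every set of ground facts $\ein$. First I would reduce the claim to the rule level: since a Datalog program is a set of rules and its semantics is the conjunction of the per-rule first-order formulas $\denot{\mathcal{R}} = \forall \vec x, \vec y.\ B(\vec x, \vec y) \rightarrow H(\vec x)$, and since $\prog\hat\sigma$ is obtained by applying $\hat\sigma$ independently to each rule, it suffices to show that each rule $\mathcal{R}$ and its renamed version $\mathcal{R}\hat\sigma$ denote the \emph{same} first-order formula up to renaming of bound (universally quantified) variables.

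The crux is then the standard fact that a universally quantified first-order sentence is invariant under injective renaming of its quantified variables: $\forall \vec x, \vec y.\ \phi(\vec x, \vec y)$ is logically equivalent to $\forall \hat\sigma(\vec x), \hat\sigma(\vec y).\ \phi(\hat\sigma(\vec x), \hat\sigma(\vec y))$ whenever $\hat\sigma$ is injective (so that no two distinct variables are identified and no accidental capture occurs). I would make this precise by arguing semantically over an arbitrary Herbrand structure $M$: $M \models \denot{\mathcal{R}}$ iff for every ground substitution $\theta$ of the variables of $\mathcal{R}$, $M \models B\theta \rightarrow H\theta$; and because $\hat\sigma$ is a bijection between the variables of $\mathcal{R}$ and those of $\mathcal{R}\hat\sigma$, the map $\theta \mapsto \hat\sigma^{-1};\theta$ is a bijection between ground substitutions for $\mathcal{R}\hat\sigma$ and ground substitutions for $\mathcal{R}$ that preserves the instantiated body and head atoms. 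Hence $M \models \denot{\mathcal{R}}$ iff $M \models \denot{\mathcal{R}\hat\sigma}$, so $\denot{\prog}$ and $\denot{\prog\hat\sigma}$ are satisfied by exactly the same Herbrand structures; in particular, for any input $\ein$, the set of Herbrand models of $\denot{\prog}\land\ein$ coincides with that of $\denot{\prog\hat\sigma}\land\ein$, so their least elements agree and $\denot{\prog}_{\ein} = \denot{\prog\hat\sigma}_{\ein}$.

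I would conclude by noting that equality-preservation of $\hat\sigma$ is exactly what injectivity buys us: if $\hat\sigma$ merged two variables, a body that was previously satisfiable by assigning them distinct values could become unsatisfiable (or a join could be introduced), changing the semantics — so injectivity is essential, not merely convenient, and should be flagged as the hypothesis that makes the argument go through. The main obstacle, such as it is, is bookkeeping rather than mathematical depth: one has to be careful that $\hat\sigma$ is applied consistently to head and body within each rule and that ``the variables of $\prog$'' are understood as rule-scoped (each rule's quantifier binds its own variables), so that applying a single global injective $\hat\sigma$ never causes a name clash across the implicit $\forall$ scopes. Once that is set up, the proof is a one-line appeal to invariance of universal sentences under bijective renaming of bound variables, lifted pointwise to rules and then to the least-fixpoint semantics.
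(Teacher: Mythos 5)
Your proposal is correct and follows essentially the same route as the paper's proof: both reduce the claim to the observation that $\denot{\prog\hat{\sigma}}$ is obtained from $\denot{\prog} = \forall X.\,\Phi(X)$ by an injective renaming of universally quantified variables, which yields a logically equivalent sentence and hence the same least Herbrand model on every input. Your version merely spells out the per-rule decomposition and the bijection on ground substitutions that the paper leaves implicit.
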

\vspace{-1.5em}
\begin{proof}
\ifextended{
See Appendix A.
}\else{
See Appendix A of the extended version~\cite{extended-version}.
}\fi
\end{proof}
\vspace{-5pt}

To see how this theorem is useful, let $\sigma$ be a model of our SMT encoding. In other words, $\model$ is a mapping from holes in the Datalog sketch to variables $V$. Now, let $\hat{\sigma}$ be an injective renaming of variables in $V$. Then, using the above theorem,  we know that any other assignment $\model' = \model \hat{\sigma}$ is also guaranteed to result in an incorrect Datalog program.

Based on this insight, we can generalize from the specific assignment $\model$ to a more general class of incorrect assignments as follows: If a hole is not assigned to a head variable, then it can be assigned to any variable in its domain as long as it respects the equalities and disequalities in  $\model$. Concretely, given  assignment $\model$, we  generalize it as follows:
\[
\emph{Generalize}(\model) = \bigwedge_{x_i \in \emph{dom}(\model)} \alpha(x_i, \model), \ \textrm{where}
\]
\vspace{-10pt}
\[
\alpha(x, \model) = \left \{ 
\begin{array}{ll}
x = \model(x)  &  \text{if} \ \model(x) \text{ is a head variable} \\
\bigwedge_{x_j \in \emph{dom}(\model)} x \star x_j
& \text{otherwise}
\end{array}
\right .
\]
Here the binary operator $\star$ is defined to be equality if $\model$ assigns both $x$ and $x_j$ to the same value, and disequality otherwise. Thus, rather than  ruling out just the current assignment $\model$, we can instead use $\neg \emph{Generalize}(\model)$ as a much more general blocking clause that rules out several equivalent Datalog programs at the same time.

\vspace{-5pt}
\begin{example} \label{ex:generalize}
Consider again the sketch from Section~\ref{sec:overview}:
\[
\begin{array}{c}
\emph{Admission}(grad, ug, num) \colondash  \ \ 
 \emph{Univ}(\emph{\hole}_1, \emph{\hole}_2, v_1), \\ \emph{Admit}(v_1, \emph{\hole}_3, \emph{\hole}_4), \ \ 
 \emph{Univ}(\emph{\hole}_5, \emph{\hole}_6, \_), \ \ \emph{Univ}(\emph{\hole}_7, \emph{\hole}_8, \_).
\end{array}
\]
\vspace{-5pt}
\[
\begin{array}{c}
\emph{\hole}_1, \emph{\hole}_3, \emph{\hole}_5, \emph{\hole}_7  \in  \set{id_1, id_2, id_3, uid_1} \  \ \emph{\hole}_4  \in  \set{num, count_1}\\
\emph{\hole}_2, \emph{\hole}_6, \emph{\hole}_8  \in  \set{grad, ug, name_1, name_2, name_3} \\
 \\
\end{array}
\vspace{-8pt}
\]
Suppose the variable for $\emph{\hole}_i$ is $x_i$ and the assignment $\model$ is:
\vspace{-2pt}
\[
\begin{array}{rl}
& x_1 = \emph{id}_1 \land x_2 = \emph{grad} \land x_3 = \emph{id}_1 \land x_4 = \emph{num} \\
\land & x_5 = \emph{id}_1 \land x_6 = \emph{ug} \land x_7 = \emph{id}_2 \land x_8 = \emph{name}_1 \\
\end{array}
\]
\vspace{-2pt}
Since $grad$, $ug$, and $num$ occur in the head,  $Generalize(\model)$ yields the following formula:
\vspace{-5pt}
\begin{equation} \label{eq:generalize1}
\begin{array}{c}
x_2 = grad \land x_4 = num \land x_6 = ug \\
\land x_1 \neq x_2 \land x_1 = x_3 \land x_1 \neq x_4 \land x_1 = x_5 \\
\land x_1 \neq x_6 \land x_1 \neq x_7 \land x_1 \neq x_8 \land \cdots \land x_7 \neq x_8 \\
\end{array}
\end{equation}
\end{example}

The other key insight underlying our sketch completion algorithm is that we can achieve even greater generalization power using the concept of \emph{minimal distinguishing projections (MDP)}, defined as follows:

\vspace{-3pt}
\begin{definition}{\bf (MDP)}
We say that a set of attributes $A$ is a minimal distinguishing projection for 
 Datalog program $\prog$ and input-output example $(\ein, \eout)$ if (1) $\Pi_A(\eout) \neq \Pi_A(\prog(\ein))$, and (2) for any $A' \subset A$, we have $\Pi_{A'}(\eout) = \Pi_{A'}(\prog(\ein))$.
\end{definition}
\vspace{-3pt}

In other words, the first condition ensures that, by just looking at attributes $A$, we can tell that program $\prog$ does not satisfy the examples. On the other hand, the second condition ensures that $A$ is minimal.

To see why minimal distinguishing projections are useful for pruning a larger set of programs, recall that our $\emph{Generalize}(\model)$ function from earlier retains a variable assignment $x \mapsto v$ if $v$ corresponds to a head variable. However, if $v$ does not correspond to an attribute in the MDP, then we will still obtain an incorrect program even if we rename $x$ to something else; thus \emph{Generalize} can drop the assignments to head variables that are not in the MDP.
Thus, given an MDP $\mdp$, we can obtain an improved generalization procedure $\emph{Generalize}(\model, \mdp)$ by using the following $\alpha(x, \model, \mdp)$ function instead of $\alpha(x, \model)$ from earlier:
\[
\alpha(x, \model, \mdp) = \left \{ 
\begin{array}{ll}
x=\model(x)  &  \text{if} \ \model(x) \in \mdp \\
\bigwedge_{x_j \in \emph{dom}(\model)} x \star x_j
& \text{otherwise}
\end{array}
\right .
\]

Because not all head variables  correspond to an MDP attribute, performing generalization  this way allows us to obtain a better blocking clause that rules out many more Datalog programs in one iteration. 
\vspace{-5pt}
\begin{example} \label{ex:mdp}
Consider the same sketch and assignment $\model$ from Example~\ref{ex:generalize}, but now suppose we are given an MDP $\mdp = \set{num}$. Then the function $Generalize(\model, \mdp)$ yields the following more general formula:
\vspace{-5pt}
\begin{equation} \label{eq:generalize2}
\begin{array}{c}
x_4 = num \land x_1 \neq x_2 \land x_1 = x_3 \land x_1 \neq x_4 \land x_1 = x_5 \\
\land x_1 \neq x_6 \land x_1 \neq x_7 \land x_1 \neq x_8 \land \cdots \land x_7 \neq x_8 \\
\end{array}
\vspace{-5pt}
\end{equation}
Note that  (\ref{eq:generalize2}) is more general (i.e., weaker) than  (\ref{eq:generalize1})  because it drops the constraints $x_2 = grad$ and $x_6 = ug$. Therefore, the negation of (\ref{eq:generalize2}) is a better blocking clause than the negation of (\ref{eq:generalize1}), since it rules out more programs in one step.
\end{example}
\vspace{-5pt}

\begin{figure}[!t]
\vspace{-5pt}
\begin{algorithm}[H]
\caption{Analyzing outputs to prune search space}
\label{algo:analyze}
\begin{algorithmic}[1]
\Procedure{\textsc{Analyze}}{$\model, \eout', \eout$}
\vspace{2pt}
\small
\Statex \textbf{Input:} Model $\model$, actual output $\eout'$, expected output $\eout$
\Statex \textbf{Output:} Blocking clause $\phi$
\vspace{2pt}

\State $\phi \leftarrow \emph{true}$;
\State $\mdpSet \leftarrow \textsc{MDPSet}(\eout', \eout)$;
\For{\textbf{each} $\mdp \in \mdpSet$}
    \State $\psi \leftarrow \emph{true}$;
    \For{\textbf{each} $(x_i, x_j) \in \dom(\model) \times \dom(\model)$}
        \If{$\model(x_i) = \model(x_j)$} $\psi \leftarrow \psi \land x_i = x_j$;
        \Else\ $\psi \leftarrow \psi \land x_i \neq x_j$; \EndIf
    \EndFor
    \For{\textbf{each} $x_i \in \dom(\model)$}
        \If{$\model(x_i) \in \mdp$} $\psi \leftarrow \psi \land x_i = \model(x_i)$; \EndIf
    \EndFor
    \State $\phi \leftarrow \phi \land \neg \psi$;
\EndFor
\State \Return $\phi$;

\EndProcedure
\end{algorithmic}
\end{algorithm}
\vspace{-20pt}
\end{figure}

Based on this discussion, we now explain the full {\sc Analyze} procedure in Algorithm~\ref{algo:analyze}.  This procedure takes as input a model $\sigma$ of the SMT encoding and the actual and expected outputs $\eout', \eout$. Then, at line 3, it invokes the {\sc MDPSet} procedure to obtain a set $\mdpSet$ of minimal distinguishing projections and uses each MDP $\mdp \in \mdpSet$ to generate a  blocking clause as discussed above (lines 6--10).

 The {\sc MDPSet} procedure is shown in Algorithm~\ref{algo:mdpset} and uses a breadth-first search algorithm to compute the set of all minimal distinguishing projections.
Specifically, it  initializes a queue $\worklist$ with singleton projections $\set{a}$ for each attribute $a$ in the output (lines 2 -- 5). Then, it repeatedly dequeues a projection $L$ from $\worklist$ and checks if $L$ is an MDP (lines 6 -- 14). In particular, if $L$ can distinguish outputs $\eout'$ and $\eout$ (line 14) and there is no existing projection $L''$ in the current MDP set $\mdpSet$ such that $L'' \subseteq L$, then $L$ is an MDP. If $L$ cannot distinguish outputs $\eout'$ and $\eout$ (line 8), we enqueue all of its extensions $L'$ with one more attribute than $L$ and move on to the next projection in queue $\worklist$.
\vspace{-5pt}
\begin{example}
Let us continue with Example~\ref{ex:mdp} to illustrate how to prune incorrect Datalog programs using multiple MDPs. Suppose we obtain the MDP set $\mdpSet = \set{\mdp_1, \mdp_2}$, where $\mdp_1 = \set{num}$ and $\mdp_2 = \set{grad, ug}$. In addition to $Generalize(\model, \mdp_1)$ (see formula (\ref{eq:generalize2}) of Example~\ref{ex:mdp}), we also compute $Generalize(\model, \mdp_2)$ as:
\vspace{-4pt}
\[
\begin{array}{c}
x_2 = grad \land x_6 = ug \land x_1 \neq x_2 \land x_1 = x_3 \land x_1 \neq x_4 \\
\land x_1 = x_5 \land x_1 \neq x_6 \land x_1 \neq x_7 \land x_1 \neq x_8 \land \cdots \land x_7 \neq x_8 \\
\end{array}
\vspace{-2pt}
\]
By adding  both blocking clauses $\neg Generalize(\model, \mdp_1)$  as well as $\neg Generalize(\model, \mdp_2)$, we can prune even more incorrect Datalog programs.
\end{example}
\begin{figure}[!t]
\vspace{-5pt}
\begin{algorithm}[H]
\caption{Computing a set of MDPs}
\label{algo:mdpset}
\begin{algorithmic}[1]
\Procedure{\textsc{MDPSet}}{$\eout', \eout$}
\vspace{2pt}
\small
\Statex \textbf{Input:} Actual output $\eout'$, expected output $\eout$
\Statex \textbf{Output:} A set of minimal distinguishing projections $\mdpSet$
\vspace{2pt}

\State $\mdpSet \leftarrow \emptyset$; \ $\visited \leftarrow \emptyset$;
\State $\worklist \leftarrow \textsf{EmptyQueue}()$;
\For{\textbf{each} $\attr \in \textsf{Attributes}(\eout')$}
    \State $\worklist.\textsf{Enqueue}(\set{\attr})$; $\visited \leftarrow \visited \cup \set{\set{\attr}}$;
\EndFor
\While{$\neg \worklist.\textsf{IsEmpty}()$}
    \State $L \leftarrow \worklist.\textsf{Dequeue}()$;
    \If{$\Pi_L(\eout') = \Pi_L(\eout)$}
        \For{\textbf{each} $\attr' \in \textsf{Attributes}(\eout') \setminus L$}
            \State $L' \leftarrow L \cup \set{\attr'}$;
            \If{$L' \not\in \visited$}
                \State $\worklist.\textsf{Enqueue}(L')$;
                \State $\visited \leftarrow \visited \cup \set{L'}$;
            \EndIf
        \EndFor
    \ElsIf{$\nexists L'' \in \mdpSet.~ L'' \subseteq L$} $\mdpSet \leftarrow \mdpSet \cup \set{L}$;
    \EndIf
\EndWhile
\State \Return $\mdpSet$;

\EndProcedure
\end{algorithmic}
\end{algorithm}
\vspace{-20pt}
\end{figure}

\vspace{-0.2in}
\begin{theorem} \label{thm:prune-safety}
Let $\phi$ be a blocking clause returned by the call to {\sc Analyze} at line 10 of Algorithm~\ref{algo:synthesis}. If $\sigma$ is a model of $\neg \phi$, then $\sigma$ corresponds to an incorrect Datalog program.
\end{theorem}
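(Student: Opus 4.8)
The plan is to show that any model $\sigma$ of $\neg\phi$ yields a Datalog program whose output on $\ein$ differs from $\eout$. By construction of \textsc{Analyze} (Algorithm~\ref{algo:analyze}), $\phi = \bigwedge_{\mdp \in \mdpSet} \neg\psi_\mdp$, so $\neg\phi = \bigvee_{\mdp\in\mdpSet} \psi_\mdp$; hence it suffices to fix an arbitrary $\mdp\in\mdpSet$ and argue that every model $\sigma$ of $\psi_\mdp$ corresponds to an incorrect program. Here $\psi_\mdp$ records (i) all equalities and disequalities among the hole-variables $x_i$ that hold under the ``bad'' model $\model$ passed to \textsc{Analyze}, and (ii) the assignments $x_i = \model(x_i)$ for exactly those holes whose $\model$-value is an attribute in $\mdp$.

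First I would establish the core lemma: if $\sigma$ is a model of $\psi_\mdp$ then there is an injective renaming $\hat\sigma$ of the sketch variables such that the program $\prog_\sigma$ induced by $\sigma$ equals $\prog_\model\hat\sigma$ \emph{after restricting attention to the columns in $\mdp$}, where $\prog_\model$ is the incorrect program that produced $\eout' \neq \eout$. Concretely: since $\sigma\models\psi_\mdp$, $\sigma$ induces the same partition on $\dom(\model)$ as $\model$ does (clauses of type (i)), so the map sending the $\model$-value of each equivalence class to the $\sigma$-value of that class is a well-defined injection $\hat\sigma$; moreover, by the clauses of type (ii), $\hat\sigma$ fixes every variable that $\model$ assigns to an MDP attribute. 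Thus $\prog_\sigma = \prog_\model\hat\sigma$ and, by Theorem~\ref{thm:substitution}, $\prog_\sigma \simeq \prog_\model$, so $\denot{\prog_\sigma}_{\ein}$ is a renaming of $\eout' = \denot{\prog_\model}_{\ein}$ that \emph{agrees with $\eout'$ on every attribute in $\mdp$}, because $\hat\sigma$ is the identity on those attributes. (I must be slightly careful here: the renaming $\hat\sigma$ acts on Datalog variables, not on output columns; but the output columns are precisely the head variables, and the type-(ii) clauses pin down exactly the head variables that lie in $\mdp$, so those columns are untouched.)

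Then I would close the argument using the defining property of an MDP. By the definition of minimal distinguishing projection, $\Pi_\mdp(\eout) \neq \Pi_\mdp(\prog_\model(\ein)) = \Pi_\mdp(\eout')$. Since $\denot{\prog_\sigma}_{\ein}$ agrees with $\eout'$ on the columns of $\mdp$, we get $\Pi_\mdp(\denot{\prog_\sigma}_{\ein}) = \Pi_\mdp(\eout') \neq \Pi_\mdp(\eout)$, and therefore $\denot{\prog_\sigma}_{\ein} \neq \eout$, i.e.\ $\prog_\sigma$ is incorrect. Since $\mdp\in\mdpSet$ was arbitrary and $\neg\phi = \bigvee_\mdp \psi_\mdp$, every model of $\neg\phi$ is incorrect, which is the claim. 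The only edge case worth noting is that \textsc{MDPSet} may return $\emptyset$ (if $\eout' = \eout$), but in that case line~10 is unreachable since Algorithm~\ref{algo:synthesis} would already have returned $\prog$ at line~9; so we may assume $\mdpSet \neq \emptyset$ whenever \textsc{Analyze} is invoked with a genuinely incorrect program.

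The main obstacle I anticipate is making the ``$\prog_\sigma = \prog_\model\hat\sigma$ on the $\mdp$-columns'' step fully rigorous: one has to be precise about how a model of the SMT encoding is turned into a concrete program (the \textsf{Instantiate} map), how equality classes of $x_i$'s under a model translate into shared Datalog variables, and why an arbitrary bijection between the two sets of equality classes extends to a genuine injective variable substitution in the sense required by Theorem~\ref{thm:substitution}. In particular I need that distinct equivalence classes under $\model$ stay distinct under $\sigma$ (so the induced map is injective and Theorem~\ref{thm:substitution} applies) — this is exactly guaranteed by the disequality clauses $x_i \neq x_j$ in $\psi_\mdp$. Once that bookkeeping is in place, the rest is a direct appeal to Theorem~\ref{thm:substitution} and the MDP definition.
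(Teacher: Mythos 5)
Your proposal is correct and follows essentially the same route as the paper's proof: decompose $\neg\phi$ into the per-MDP formulas $\psi_\mdp = \emph{Generalize}(\model,\mdp)$, show that any model of $\psi_\mdp$ arises from an injective renaming that fixes exactly the head variables corresponding to attributes in $\mdp$, conclude that the two programs' outputs agree after projecting onto $\mdp$, and then invoke the distinguishing property of $\mdp$. The one imprecision is your intermediate claim that ``by Theorem~\ref{thm:substitution}, $\prog_\sigma \simeq \prog_\model$'' --- that cannot hold literally since $\hat\sigma$ may move head variables outside $\mdp$; the paper makes the step you flag in your parenthetical rigorous by introducing the projected rule $R(\mdp)$ (whose head keeps only the $\mdp$-variables), applying Theorem~\ref{thm:substitution} to the projected programs, and relating $\denot{R(\mdp)}_{\ein}$ to $\Pi_\mdp(\denot{R}_{\ein})$ via a separate projection lemma.
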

\begin{proof}
\ifextended{
See Appendix A.
}\else{
See Appendix A of the extend version~\cite{extended-version}.
}\fi
\end{proof}
\vspace{-3pt}
\section{Implementation} \label{sec:impl}

We have implemented the proposed technique as a new tool called \toolname. Internally, \toolname uses the Z3  solver~\cite{z3-tacas08} for answering SMT queries and leverages the Souffle framework~\cite{souffle-cav16} for evaluating Datalog programs. In the remainder of this section, we discuss some extensions  over the synthesis algorithm described in Section~\ref{sec:synthesis}.

\vspace{5pt}
\noindent \emph{\textbf{Interactive mode.}}
In Section~\ref{sec:synthesis}, we presented our technique as returning a \emph{single} program that is consistent with a given  input-output example. However, in this \emph{non-interactive mode}, \toolname does not guarantee the uniqueness of the program consistent with the given example. To address this potential usability issue, \toolname can also be used in a so-called \emph{interactive mode} where \toolname iteratively queries the user for more examples in order to resolve ambiguities. Specifically, when used in this interactive mode, \toolname first checks if there are multiple programs $\prog, \prog'$ that are consistent with the provided examples $(\ein, \eout)$, and, if so, \toolname identifies a small \emph{differentiating input} $\ein'$ such that  $\prog$ and $\prog'$ yield different outputs on $\ein'$. Then, \toolname asks the user to provide the corresponding output for $\ein'$. More details on how we implement this interactive mode  can be found in
\ifextended{
\!\!\!Appendix B.
}\else{
\!\!\!Appendix B of the extended version~\cite{extended-version}.
}\fi

\begin{example}
Suppose the source database contains two relations Employee(name, deptId) and Department(id, deptName), and we want to obtain the  relation WorksIn(name, deptName) by  joining Employee and Department on deptId=id and then applying projection. Suppose the user only provides the input example Employee(Alice, 11), Department(11, CS) and the  output WorksIn(Alice, CS). Now \toolname may return one of the following results:
\begin{enumerate}[itemsep=-3pt]
\vspace{-3pt}
\item[(1)] WorksIn(x, y) :- Employee(x, z), Department(z, y).
\item[(2)] WorksIn(x, y) :- Employee(x, z), Department(w, y).
\vspace{-3pt}
\end{enumerate}
Note that both Datalog programs are consistent with the input-output example, but only program (1) is the transformation the user wants. Since the program returned by \toolname depends on the model sampled by the SMT solver, it is possible that \toolname returns the incorrect solution (2) instead of the desired program (1).
Using \toolname in the interactive mode solves this problem. In this mode, \toolname searches for an input that distinguishes the two programs shown above. In this case, such a distinguishing input is Employee(Alice, 11), Employee(Bob, 12), Department(11, CS), Department(12, EE), and \toolname asks the user to provide the corresponding output.  Now, if the user provides the output WorksIn(Alice, CS), WorksIn(Bob, EE), only program (1) will be consistent and \toolname successfully eliminates the initial ambiguity.
\end{example}

\noindent \emph{\textbf{Filtering operation.}}
While the synthesis algorithm described in Section~\ref{sec:synthesis} does not support data filtering during  migration, \toolname allows the target database instance to contain a subset of the data in the source instance. However, the filtering operations supported by \toolname are restricted to predicates that can be expressed as a conjunction of equalities. To see how \toolname supports such filtering operations, observe that if an extensional relation $R$ uses a constant $c$ as the argument of attribute $a_i$, this is the same as filtering out  tuples where the corresponding value is not $c$.
Based on this observation, \toolname allows program sketches where the domain of a hole can include constants in addition to variables. These constants are drawn from values in the output example, and the sketch completion algorithm performs enumerative search over these constants.

\vspace{5pt}
\noindent \emph{\textbf{Database instance construction.}}
\toolname builds the target database instance from the output facts of the synthesized Datalog program as described in Section~\ref{sec:prelim-migration}. However, \toolname performs one optimization to make large-scale data migration practical: We leverage MongoDB~\cite{mongodb-web} to build indices on attributes that connect records to their parents. This strategy allows \toolname to quickly look up the children of a given record and  makes the construction of the target database  more efficient.

\section{Evaluation} \label{sec:eval}

To evaluate \toolname, we perform experiments that are designed to answer the following research questions:
\begin{itemize}\itemsep0em
    \item[\textbf{RQ1}] Can \toolname successfully migrate real-world data sets given a representative set of records, and how good are the synthesized programs?
    \item[\textbf{RQ2}] How sensitive is the synthesizer to the number and quality of examples?
    \item[\textbf{RQ3}] How helpful is \toolname for users in practice, and how do users choose between multiple correct answers? 
    \item[\textbf{RQ4}] Is the proposed sketch completion algorithm significantly more efficient than a simpler baseline?
    \item[\textbf{RQ5}] How does the proposed synthesis technique compare against prior techniques? 
\end{itemize}
\begin{table}[!t]
\centering
\small
\begin{tabular}{|c|c|l|}
\hline
\textbf{Name} & \textbf{Size} & \textbf{Description} \\
\hline
Yelp & 4.7GB & Business and reviews from Yelp \\
\hline
IMDB & 6.3GB & Movie and crew info from IMDB \\
\hline
Mondial & 3.7MB & Geography information \\
\hline
DBLP & 2.0GB & Publication records from DBLP \\
\hline
MLB & 0.9GB & Pitch data of Major League Baseball \\
\hline
Airbnb & 0.4GB & Berlin Airbnb data \\
\hline
Patent & 1.7GB & Patent Litigation Data 1963-2015 \\
\hline
Bike & 2.7GB & Bike trip data in Bay Area \\
\hline
Tencent & 1.0GB & User followers in Tencent Weibo \\
\hline
Retina & 0.1GB & Biological info of mouse retina \\
\hline
Movie & 0.1GB & Movie ratings from MovieLens \\
\hline
Soccer & 0.2GB & Transfer info of soccer players \\
\hline
\end{tabular}
\vspace{5pt}
\caption{Datasets used in the evaluation.}
\label{tab:datasets}
\vspace{-15pt}
\end{table}

\noindent \emph{\textbf{Benchmarks.}}
To answer these research questions, we collected $12$ real-world database instances (see Table~\ref{tab:datasets} for details) and created $28$ benchmarks in total.
Specifically,  four of these datasets (namely Yelp, IMDB, Mondial, and DBLP) are taken from prior work~\cite{mitra-vldb18}, and the remaining eight are taken from open dataset websites such as Kaggle~\cite{kaggle-web}. For the document-to-relational transformations, we used exactly the same benchmarks as prior work~\cite{mitra-vldb18}. For the remaining cases (e.g., document-to-graph or graph-to-relational), we used the source schemas in the original dataset but created a suitable target schema ourselves.
As summarized in Table~\ref{tab:benchmarks}, our $28$ benchmarks collectively cover a broad range of migration scenarios between different types of databases.\footnote{Schemas for all benchmarks are available at \url{https://bit.ly/schemas-dynamite}.}

\begin{table}
\centering
\small
\begin{tabular}{|c|c|c|c|c|c|c|}
\hline
\multirow{2}{*}{\textbf{\!\!Benchmark\!\!}} &
\multicolumn{3}{c|}{\textbf{Source Schema}} &
\multicolumn{3}{c|}{\textbf{Target Schema}} \\
\cline{2-7}
& \textbf{\!\!Type\!\!} & \textbf{\!\!\#Recs\!\!} & \textbf{\!\!\#Attrs\!\!} & \textbf{\!\!Type\!\!} & \textbf{\!\!\#Recs\!\!} & \textbf{\!\!\#Attrs\!\!} \\
\hline
Yelp-1 & D & 11 & 58 & R & 8 & 32 \\
\hline
IMDB-1 & D & 12 & 21 & R & 9 & 26 \\
\hline
DBLP-1 & D & 37 & 42 & R & 9 & 35 \\
\hline
Mondial-1 & D & 37 & 113 & R & 25 & 110 \\
\hline
MLB-1 & R & 5 & 83 & D & 7 & 85 \\
\hline
Airbnb-1 & R & 4 & 30 & D & 6 & 24 \\
\hline
Patent-1 & R & 5 & 49 & D & 7 & 50 \\
\hline
Bike-1 & R & 4 & 48 & D & 7 & 47 \\
\hline
Tencent-1 & G & 2 & 8 & R & 1 & 3 \\
\hline
Retina-1 & G & 2 & 17 & R & 2 & 13 \\
\hline
Movie-1 & G & 5 & 18 & R & 5 & 21 \\
\hline
Soccer-1 & G & 10 & 30 & R & 7 & 21 \\
\hline
Tencent-2 & G & 2 & 8 & D & 1 & 3 \\
\hline
Retina-2 & G & 2 & 17 & D & 2 & 15 \\
\hline
Movie-2 & G & 5 & 18 & D & 4 & 14 \\
\hline
Soccer-2 & G & 10 & 30 & D & 7 & 23 \\
\hline
Yelp-2 & D & 11 & 58 & G & 4 & 31 \\
\hline
IMDB-2 & D & 12 & 21 & G & 11 & 19 \\
\hline
DBLP-2 & D & 37 & 42 & G & 17 & 28 \\
\hline
Mondial-2 & D & 37 & 113 & G & 27 & 78 \\
\hline
MLB-2 & R & 5 & 83 & G & 12 & 90 \\
\hline
Airbnb-2 & R & 4 & 30 & G & 7 & 32 \\
\hline
Patent-2 & R & 5 & 49 & G & 8 & 49 \\
\hline
Bike-2 & R & 4 & 48 & G & 6 & 52 \\
\hline
MLB-3 & R & 5 & 83 & R & 4 & 75 \\
\hline
Airbnb-3 & R & 4 & 30 & R & 7 & 33 \\
\hline
Patent-3 & R & 5 & 49 & R & 8 & 52 \\
\hline
Bike-3 & R & 4 & 48 & R & 5 & 52 \\
\hline
\hline
\textbf{Average} & - & \textbf{10.2} & \textbf{44.4} & - & \textbf{8.0} & \textbf{39.8} \\
\hline
\end{tabular}
\vspace{10pt}
\caption{Statistics of benchmarks. ``R'' stands for relational, ``D'' stands for document, and ``G'' stands for graph.}
\label{tab:benchmarks}
\vspace{-10pt}
\end{table}

\vspace{5pt}
\noindent \emph{\textbf{Experimental setup.}}
All experiments are conducted on a machine with Intel Xeon(R) E5-1620 v3 quad-core CPU and 32GB of physical memory, running the Ubuntu 18.04 OS.

\subsection{Synthesis Results in Non-Interactive Mode} \label{sec:exp-main}

In this section, we evaluate RQ1 by using \toolname to migrate  the datasets from Table~\ref{tab:datasets} for the source and target schemas from Table~\ref{tab:benchmarks}. To perform this experiment, we first constructed a representative set of input-output examples for each record in the source and target schemas. As shown in Table~\ref{tab:results}, across all benchmarks, the average number of records in the input (resp. output) example is $2.6$ (resp. $2.2$). Given these examples, we then used \toolname to synthesize a migration script consistent with the given examples and ran it on the real-world datasets from Table~\ref{tab:datasets}.\footnote{All input-output examples and synthesized programs are available at \url{https://bit.ly/benchmarks-dynamite}.}
We now highlight the key take-away lessons from this experiment  whose results are summarized in Table~\ref{tab:results}.

\vspace{3pt}
\noindent \emph{\textbf{Synthesis time.}}
Even though the search space of possible Datalog programs is very large ($5.1 \times 10^{39}$ on average), \toolname can find a Datalog program consistent with the examples in an average of $7.3$ seconds, with maximum synthesis time being $87.9$ seconds.

\begin{table*}
\centering
\small
\begin{tabular}{|c|c|c|c|c|c|c|c|c|c|}
\hline
\multirow{2}{*}{\textbf{\!\!Benchmark\!\!}} &
\multicolumn{2}{c|}{\!\textbf{Avg \# Examples}\!} &
\textbf{Search} &
\!\!\textbf{Synthesis}\!\! &
\multirow{2}{*}{\textbf{\# Rules}} &
\textbf{\# Preds} &
\textbf{\# Optim} &
\textbf{Dist to} &
\!\!\textbf{Migration}\!\! \\
\cline{2-3}
& \textbf{Source} & \textbf{Target} & \textbf{Space} & \textbf{Time (s)} & & \textbf{per Rule} & \textbf{Rules} & \textbf{Optim} & \textbf{Time (s)} \\
\hline
Yelp-1 & 4.7 & 3.9 & $4.8 \times 10^{120}$ & 6.0 & 8 & 1.8 & 7 & 0.38 & 328 \\
\hline
IMDB-1 & 6.0 & 2.7 & $1.5 \times 10^{20}$ & 2.7 & 9 & 3.6 & 5 & 1.22 & 1153 \\
\hline
DBLP-1 & 1.5 & 2.6 & $1.1 \times 10^{14}$ & 0.8 & 9 & 6.4 & 0 & 2.44 & 1060 \\
\hline
Mondial-1 & 1.2 & 2.8 & $2.2 \times 10^{88}$ & 2.5 & 25 & 3.3 & 17 & 1.40 & 5 \\
\hline
MLB-1 & 2.0 & 1.4 & $9.1 \times 10^{81}$ & 13.0 & 7 & 3.9 & 2 & 1.71 & 1020 \\
\hline
Airbnb-1 & 4.0 & 2.5 & $1.7 \times 10^{38}$ & 2.0 & 6 & 2.7 & 4 & 1.33 & 286 \\
\hline
Patent-1 & 2.6 & 2.3 & $1.4 \times 10^{49}$ & 3.0 & 7 & 2.4 & 5 & 1.14 & 553 \\
\hline
Bike-1 & 2.3 & 2.0 & $3.1 \times 10^{47}$ & 2.0 & 7 & 2.0 & 5 & 0.71 & 2601 \\
\hline
Tencent-1 & 1.5 & 1.0 & $1.3 \times 10^{12}$ & 0.2 & 1 & 4.0 & 0 & 3.00 & 65 \\
\hline
Retina-1 & 1.5 & 1.5 & $3.1 \times 10^{19}$ & 0.8 & 2 & 2.0 & 2 & 0.00 & 9 \\
\hline
Movie-1 & 3.6 & 2.2 & $5.2 \times 10^{11}$ & 2.9 & 5 & 2.8 & 3 & 1.00 & 1062 \\
\hline
Soccer-1 & 1.9 & 2.0 & $2.9 \times 10^{11}$ & 0.5 & 7 & 1.0 & 7 & 0.00 & 15 \\
\hline
Tencent-2 & 1.5 & 1.0 & $1.3 \times 10^{12}$ & 0.2 & 1 & 4.0 & 0 & 3.00 & 160 \\
\hline
Retina-2 & 2.0 & 2.0 & $3.3 \times 10^{19}$ & 4.0 & 2 & 2.5 & 1 & 0.50 & 22 \\
\hline
Movie-2 & 2.4 & 2.3 & $1.0 \times 10^{18}$ & 22.7 & 4 & 7.0 & 0 & 4.00 & 40 \\
\hline
Soccer-2 & 2.5 & 2.1 & $6.9 \times 10^{22}$ & 87.9 & 7 & 4.4 & 4 & 1.71 & 311 \\
\hline
Yelp-2 & 4.5 & 1.8 & $2.9 \times 10^{73}$ & 0.5 & 4 & 1.0 & 4 & 0.00 & 1160 \\
\hline
IMDB-2 & 2.4 & 2.5 & $2.3 \times 10^{11}$ & 1.1 & 11 & 3.1 & 5 & 1.27 & 3409 \\
\hline
DBLP-2 & 2.1 & 2.1 & $1.2 \times 10^{4}$ & 3.6 & 17 & 1.8 & 16 & 0.06 & 1585 \\
\hline
Mondial-2 & 1.0 & 2.1 & $8.2 \times 10^{24}$ & 30.8 & 27 & 1.9 & 26 & 0.04 & 7 \\
\hline
MLB-2 & 2.2 & 1.9 & $3.3 \times 10^{84}$ & 2.6 & 12 & 1.3 & 10 & 0.25 & 785 \\
\hline
Airbnb-2 & 2.8 & 2.7 & $1.4 \times 10^{28}$ & 0.9 & 7 & 1.3 & 7 & 0.00 & 664 \\
\hline
Patent-2 & 2.0 & 2.1 & $3.9 \times 10^{51}$ & 1.0 & 8 & 1.4 & 6 & 0.38 & 786 \\
\hline
Bike-2 & 2.3 & 2.5 & $7.3 \times 10^{47}$ & 0.4 & 6 & 1.8 & 4 & 0.83 & 3346 \\
\hline
MLB-3 & 2.2 & 1.3 & $9.1 \times 10^{81}$ & 3.3 & 4 & 2.3 & 3 & 0.50 & 145 \\
\hline
Airbnb-3 & 2.5 & 2.6 & $3.3 \times 10^{28}$ & 0.5 & 7 & 1.1 & 7 & 0.00 & 57 \\
\hline
Patent-3 & 2.8 & 2.3 & $1.3 \times 10^{40}$ & 3.9 & 8 & 1.6 & 7 & 0.38 & 122 \\
\hline
Bike-3 & 4.3 & 2.2 & $7.3 \times 10^{47}$ & 4.1 & 5 & 1.8 & 4 & 0.20 & 519 \\
\hline
\hline
\textbf{Average} & \textbf{2.6} & \textbf{2.2} & $\mathbf{5.1 \times 10^{39}}$ & \textbf{7.3} & \textbf{8.0} & \textbf{2.5} & \textbf{5.8} & \textbf{0.79} & \textbf{760} \\
\hline
\end{tabular}
\vspace{5pt}
\caption{Main results. Average search space size is calculated by geometric mean; all other averages are arithmetic mean.}
\label{tab:results}
\vspace{-10pt}
\end{table*}

\vspace{3pt}
\noindent \emph{\textbf{Statistics about synthesized programs.}}
As shown in Table~\ref{tab:results}, the average number of rules in the synthesized Datalog program is $8.0$, and each rule contains an average of $2.5$ predicates in the rule body (after simplification).

\vspace{3pt}
\noindent \emph{\textbf{Quality of synthesized programs.}}
To evaluate the quality of the synthesized programs, we compared the synthesized Datalog programs against manually written ones (which we believe to be optimal). As shown in the column labeled ``\# Optim Rules'' in Table~\ref{tab:results}, on average, $5.8$ out of the $8$  Datalog rules ($72.5\%$) are \emph{syntactically identical} to the manually-written ones. In cases where the synthesized rule differs from the manually-written one, we observed that the synthesized program contains redundant  body predicates. In particular, if we quantify the distance between the two programs in terms of additional predicates, we found that the synthesized rules contain an average of $0.79$ extra predicates (shown in column labeled ``Dist to Optim''). However, note that, even in cases where the synthesized rule differs syntactically from the manually-written rule, we confirmed that the synthesized and manual rules produce \emph{the exact same output} for the given input relations in \emph{all cases}.

\vspace{3pt}
\noindent \emph{\textbf{Migration time and results.}}
For all $28$ benchmarks,  we confirmed that \toolname is able to produce the intended target database instance. As reported in the column labeled ``Migration time'', the average time taken by \toolname to convert the source instance to the target one is $12.7$ minutes for database instances containing $1.7$ GB of data on average.

\subsection{Sensitivity to Examples}

\begin{filecontents}{yelp-time.data}
a    b
1	0.497
2	0.506
3	0.513
4	0.517
5	0.527
6	0.537
7	0.551
8	0.55
\end{filecontents}

\begin{filecontents}{yelp-rate.data}
a    b
1	95
2	100
3	100
4	100
5	100
6	100
7	100
8	100
\end{filecontents}

\begin{filecontents}{imdb-time.data}
a    b
1	0.538
2	1.938
3	2.064
4	2.383
5	3.372
6	5.799
7	10.827
8	20.736
\end{filecontents}

\begin{filecontents}{imdb-rate.data}
a    b
1	78
2	91
3	100
4	100
5	100
6	100
7	100
8	100
\end{filecontents}

\begin{filecontents}{dblp-time.data}
a    b
1	0.563
2	0.572
3	0.597
4	0.596
5	0.626
6	0.647
7	0.671
8	0.672
\end{filecontents}

\begin{filecontents}{dblp-rate.data}
a    b
1	95
2	100
3	100
4	100
5	100
6	100
7	100
8	98
\end{filecontents}

\begin{filecontents}{mondial-time.data}
a    b
1	1.862
2	2.163
3	2.312
4	2.483
5	2.635
6	2.843
7	3.055
8	3.19
\end{filecontents}

\begin{filecontents}{mondial-rate.data}
a    b
1	24
2	100
3	100
4	100
5	100
6	100
7	100
8	100
\end{filecontents}

\begin{figure}[!t]
\centering
    \begin{subfigure}[b]{0.23\textwidth}
        \centering
        \begin{tikzpicture}[scale=0.7]
        \begin{axis}[
            width=2.15in,
            x label style={below=-5pt, font=\small},
            y label style={below=8pt, font=\small},
            axis y line*=left,
            xmin=0,
            xtick={0,2,...,11},
            ymin=0,
            ymax=0.7,
        	xlabel={\# Examples},
        	ylabel={Synthesis Time (s)},
        ]
        \addplot[color=blue,mark=*,width=2pt]
            table[x=a, y=b] {yelp-time.data};
        \label{legend:bench1-time}
        \end{axis}

        \begin{axis}[
            width=2.15in,
            legend cell align=left,
            legend style={at={(0.5,1.17)}, font=\small, legend columns=-1, anchor=north},
            y label style={below=160pt, font=\small},
            legend image post style={scale=0.5},
            hide x axis,
            axis y line*=right,
            xmin=0,
            xtick={0,2,...,11},
            ymin=0,
            ymax=110,
        	ylabel={Success Rate (\%)},
        ]
        \addlegendimage{/pgfplots/refstyle=legend:bench1-time}\addlegendentry{Time(s)}
        \addplot[color=red,mark=square*,width=2pt]
            table[x=a, y=b] {yelp-rate.data};
        \addlegendentry{Succ Rate(\%)}
        \end{axis}
    \end{tikzpicture}
    \vspace{-15pt}
    \caption{Yelp-1}
    \vspace{-3pt}
    \end{subfigure}
    \begin{subfigure}[b]{0.23\textwidth}
        \centering
        \begin{tikzpicture}[scale=0.7]
        \begin{axis}[
            width=2.15in,
            x label style={below=-5pt, font=\small},
            y label style={below=8pt, font=\small},
            axis y line*=left,
            xmin=0,
            xtick={0,2,...,11},
            ymin=0,
            ymax=25,
        	xlabel={\# Examples},
        	ylabel={Synthesis Time (s)},
        ]
        \addplot[color=blue,mark=*,width=2pt]
            table[x=a, y=b] {imdb-time.data};
        \label{legend:bench2-time}
        \end{axis}

        \begin{axis}[
            width=2.15in,
            legend cell align=left,
            legend style={at={(0.5,1.17)}, font=\small, legend columns=-1, anchor=north},
            y label style={below=160pt, font=\small},
            legend image post style={scale=0.5},
            hide x axis,
            axis y line*=right,
            xmin=0,
            xtick={0,2,...,11},
            ymin=0,
            ymax=110,
        	ylabel={Success Rate (\%)},
        ]
        \addlegendimage{/pgfplots/refstyle=legend:bench2-time}\addlegendentry{Time(s)}
        \addplot[color=red,mark=square*,width=2pt]
            table[x=a, y=b] {imdb-rate.data};
        \addlegendentry{Succ Rate(\%)}
        \end{axis}
    \end{tikzpicture}
    \vspace{-15pt}
    \caption{IMDB-1}
    \vspace{-3pt}
    \end{subfigure}
\\ \vspace{10pt}
    \begin{subfigure}[b]{0.23\textwidth}
        \centering
        \begin{tikzpicture}[scale=0.7]
        \begin{axis}[
            width=2.15in,
            x label style={below=-5pt, font=\small},
            y label style={below=8pt, font=\small},
            axis y line*=left,
            xmin=0,
            xtick={0,2,...,11},
            ymin=0,
            ymax=0.9,
        	xlabel={\# Examples},
        	ylabel={Synthesis Time (s)},
        ]
        \addplot[color=blue,mark=*,width=2pt]
            table[x=a, y=b] {dblp-time.data};
        \label{legend:bench3-time}
        \end{axis}

        \begin{axis}[
            width=2.15in,
            legend cell align=left,
            legend style={at={(0.5,1.17)}, font=\small, legend columns=-1, anchor=north},
            y label style={below=160pt, font=\small},
            legend image post style={scale=0.5},
            hide x axis,
            axis y line*=right,
            xmin=0,
            xtick={0,2,...,11},
            ymin=0,
            ymax=110,
        	ylabel={Success Rate (\%)},
        ]
        \addlegendimage{/pgfplots/refstyle=legend:bench3-time}\addlegendentry{Time(s)}
        \addplot[color=red,mark=square*,width=2pt]
            table[x=a, y=b] {dblp-rate.data};
        \addlegendentry{Succ Rate(\%)}
        \end{axis}
    \end{tikzpicture}
    \vspace{-15pt}
    \caption{DBLP-1}
    \vspace{-3pt}
    \end{subfigure}
    \begin{subfigure}[b]{0.23\textwidth}
        \centering
        \begin{tikzpicture}[scale=0.7]
        \begin{axis}[
            width=2.15in,
            x label style={below=-5pt, font=\small},
            y label style={below=8pt, font=\small},
            axis y line*=left,
            xmin=0,
            xtick={0,2,...,11},
            ymin=0,
            ymax=4,
        	xlabel={\# Examples},
        	ylabel={Synthesis Time (s)},
        ]
        \addplot[color=blue,mark=*,width=2pt]
            table[x=a, y=b] {mondial-time.data};
        \label{legend:bench4-time}
        \end{axis}

        \begin{axis}[
            width=2.15in,
            legend cell align=left,
            legend style={at={(0.5,1.17)}, font=\small, legend columns=-1, anchor=north},
            y label style={below=160pt, font=\small},
            legend image post style={scale=0.5},
            hide x axis,
            axis y line*=right,
            xmin=0,
            xtick={0,2,...,11},
            ymin=0,
            ymax=110,
        	ylabel={Success Rate (\%)},
        ]
        \addlegendimage{/pgfplots/refstyle=legend:bench4-time}\addlegendentry{Time(s)}
        \addplot[color=red,mark=square*,width=2pt]
            table[x=a, y=b] {mondial-rate.data};
        \addlegendentry{Succ Rate(\%)}
        \end{axis}
    \end{tikzpicture}
    \vspace{-15pt}
    \caption{Mondial-1}
    \vspace{-3pt}
    \end{subfigure}
\vspace{2pt}
\caption{Sensitivity analysis.}
\label{fig:sensitivity}
\vspace{-17pt}
\end{figure}
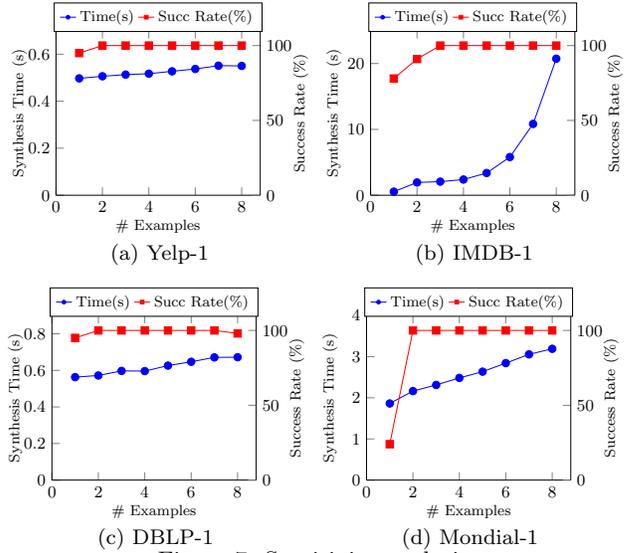

To answer  RQ2, we perform an experiment that measures the sensitivity of \toolname to the number and quality of records in the provided input-output examples.  To perform this experiment, we first fix the number $r$ of records in the input example. Then, we randomly generate $100$ examples of size $r$ and obtain the output example by running the ``golden'' program (written manually) on the randomly generated input example. Then, for each size $r \in [1,8]$, we measure average running time across all $100$ examples as well as the percentage of examples for which \toolname synthesizes the correct program within 10 minutes.

The results of this experiment are summarized in Figure~\ref{fig:sensitivity} for four representative benchmarks (the remaining $24$ are provided in
\ifextended{
\!Appendix C).
}\else{
\!Appendix C of the extend version~\cite{extended-version}).
}\fi
Here, the $x$-axis shows the number of records $r$ in the input example, and the $y$-axis shows both (a) the average running time in seconds for each $r$ (the blue line with circles) and (b) the $\%$ of correctly synthesized programs given $r$ randomly-generated records (the red line with squares).

Overall, this experiment shows that \toolname is not particularly sensitive to the number and quality of examples for $26$ out of $28$ benchmarks: it can synthesize the correct Datalog program in over $90\%$ of the cases using $2-3$ \emph{randomly-generated} examples. Furthermore,  synthesis time grows roughly linearly for $24$ out of $28$ benchmarks. For $2$ of the remaining benchmarks (namely, IMDB-1 and Movie-2), synthesis time seems to grow exponentially in example size; however, since \toolname can already achieve a success rate over $90\%$  with just 2-3 examples, this growth in running time is not a major concern. Finally, for the last $2$ benchmarks (namely, Retina-2 and Soccer-2), \toolname does not seem to scale beyond example size of $3$.  For these benchmarks, \toolname seems to generate complicated intermediate programs with complex join structure, which causes the resulting output to be very large and causes MDP analysis to become very time-consuming.  However, this behavior (which is triggered by randomly generated inputs) can be prevented by choosing more representative examples that allow \toolname to generate better sketches.

\subsection{User Study}

To answer question RQ3, we conduct a small-scale user study to evaluate whether \toolname is helpful to users in practice. To conduct this user study, we recruited 10  participants (all of them graduate computer science students with advanced programming skills) and asked them to solve two of our benchmarks from Table~\ref{tab:benchmarks}, namely Tencent-1 and Retina-1, with and without using \toolname.  In order to avoid any potential bias, we randomly assigned  users to solve each benchmark either using \toolname or without. For the setting where users were not allowed to use \toolname, they were, however, permitted to use any programming language and library of their choice, and they were also allowed to consult search engines and on-line forums. In the setting where the participants did use \toolname, we instructed them to use the tool in interactive mode (recall Section~\ref{sec:impl}). Overall, exactly $5$ randomly chosen participants solved Tencent-1 and Retina-1 using \toolname, and $5$ users solved each benchmark without \toolname.

\begin{filecontents}{dynamite-manual.data}
a    b
Tencent-1 184
Retina-1 579
\end{filecontents}

\begin{filecontents}{manual.data}
a    b
Tencent-1 1800
Retina-1 2907
\end{filecontents}

\begin{filecontents}{dynamite-count.data}
a    b
Tencent-1 5
Retina-1 5
\end{filecontents}

\begin{filecontents}{manual-count.data}
a    b
Tencent-1 3
Retina-1 2
\end{filecontents}

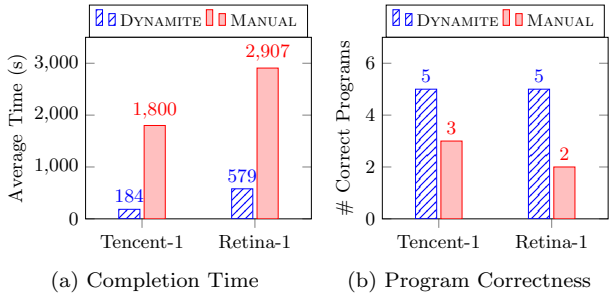
\begin{figure}[!t]
\centering
    \begin{subfigure}[b]{0.23\textwidth}
        \centering
        \begin{tikzpicture}[scale=0.8]
        \begin{axis}[
            width=2.1in,
            ybar,
            legend cell align=left,
            legend entries={\toolname, \textsc{Manual}},
            legend style={at={(0.5,1.18)}, font=\small, legend columns=-1, anchor=north},
            x label style={below=0pt},
            y label style={below=-5pt},
            ylabel={Average Time (s)},
            symbolic x coords={Tencent-1,Retina-1},
            enlarge x limits=0.5,
            xtick=data,
            ymin=0,
            ymax=3500,
            nodes near coords,
            nodes near coords align={vertical},
        ]
        \addplot[color=blue,pattern=north east lines,pattern color=blue]
            table [x=a, y=b] {dynamite-manual.data};
        \addplot[color=red,fill=pink]
            table [x=a, y=b] {manual.data};
        \end{axis}
        \end{tikzpicture}
        \caption{Completion Time}
    \end{subfigure}
    \begin{subfigure}[b]{0.23\textwidth}
        \centering
        \begin{tikzpicture}[scale=0.8]
        \begin{axis}[
            width=2.1in,
            ybar,
            legend cell align=left,
            legend entries={\toolname, \textsc{Manual}},
            legend style={at={(0.5,1.18)}, font=\small, legend columns=-1, anchor=north},
            x label style={below=0pt},
            y label style={below=10pt},
            ylabel={\# Correct Programs},
            symbolic x coords={Tencent-1,Retina-1},
            enlarge x limits=0.5,
            xtick=data,
            ymin=0,
            ymax=7,
            nodes near coords,
            nodes near coords align={vertical},
        ]
        \addplot[color=blue,pattern=north east lines,pattern color=blue]
            table [x=a, y=b] {dynamite-count.data};
        \addplot[color=red,fill=pink]
            table [x=a, y=b] {manual-count.data};
        \end{axis}
        \end{tikzpicture}
        \caption{Program Correctness}
    \end{subfigure}
\vspace{2pt}
\caption{Results of user study.}
\label{fig:user-study}
\vspace{-7pt}
\end{figure}

The results of this study are provided in Figure~\ref{fig:user-study}.  Specifically, Figure~\ref{fig:user-study}(a) shows the average time to solve the two benchmarks with and without using \toolname. As we can see from this Figure, users are significantly more productive, namely by a factor of $6.2$x on average, when migrating  data with the aid of \toolname. Furthermore, as shown in Figure~\ref{fig:user-study}(b), participants always generate the correct target database instance when using \toolname; however, they fail to do so in $50\%$ of the cases when they write the program manually. Upon further inspection, we found that the manually written programs contain subtle bugs, such as failing to introduce newlines or quotation marks. We believe these results demonstrate that \toolname can aid users, including seasoned programmers, successfully and more efficiently complete real-world data migration tasks.

\subsection{Comparison with Synthesis Baseline} \label{sec:exp-baseline}

To answer   \text{RQ4}, we  compare \toolname against a baseline called \toolvar that uses enumerative search instead of the sketch completion technique described in Section~\ref{sec:sketch-completion}. In particular, \toolvar uses the lazy enumeration algorithm based on SMT, but it does not use the {\sc Analyze} procedure for learning from failed synthesis attempts. Specifically, whenever the SMT solver returns an incorrect assignment $\sigma$, \toolvar just uses $\neg \sigma$ as a blocking clause. Thus, \toolvar essentially enumerates all possible sketch completions  until it finds a Datalog program that satisfies the input-output example.

Figure~\ref{fig:comparison}(a) shows the results of the comparison when using the manually-provided input-output examples from Section~\ref{sec:exp-main}. In particular, we plot the time in seconds that each version takes to solve the first $n$ benchmarks. As shown in Figure~\ref{fig:comparison}(a), \toolname can successfully solve all $28$ benchmarks whereas \toolvar can  only solve $22$ ($78.6\%$) within the one hour time limit.  Furthermore, for the first $22$ benchmarks that can be solved by both versions, \toolname is $9.2$x faster compared to \toolvar ($1.8$ vs $16.5$ seconds).
Hence, this experiment demonstrates the practical advantages of our proposed  sketch completion algorithm compared to a simpler enumerative-search baseline.

\begin{filecontents}{dynamite.data}
a    b
1	0.2
2	0.2
3	0.4
4	0.5
5	0.5
6	0.5
7	0.8
8	0.8
9	0.9
10	1
11	1.1
12	2
13	2
14	2.5
15	2.6
16	2.7
17	2.9
18	3
19	3.3
20	3.6
21	3.9
22	4
23	4.1
24	6
25	13
26	22.7
27	30.9
28	87.9
\end{filecontents}

\begin{filecontents}{enum.data}
a    b
1	0.3
2	0.4
3	0.5
4	0.8
5	0.8
6	1.1
7	1.6
8	1.6
9	2.2
10	6.4
11	11.4
12	14
13	18.1
14	20.7
15	20.8
16	25.6
17	25.8
18	25.9
19	35.6
20	38.4
21	40.4
22	69.9
\end{filecontents}

\begin{filecontents}{dynamite-mitra.data}
a    b
Yelp    6.0
IMDB    2.7
DBLP    0.8
Mondial 2.5
\end{filecontents}

\begin{filecontents}{mitra.data}
a    b
Yelp    14.4
IMDB    33.5
DBLP    7.4
Mondial 62.2
\end{filecontents}

\begin{figure}[!t]
\centering
    \begin{subfigure}[b]{0.23\textwidth}
        \centering
        \begin{tikzpicture}[scale = 0.75]
        \begin{axis}[
            width=2.3in,
            legend cell align=left,
            legend entries={\toolname, \toolvar},
            legend style={at={(0,1.0)}, font=\scriptsize, legend columns=1, anchor=north west},
            x label style={below=0pt},
            y label style={below=5pt},
            ymin = -2,
        	xlabel={\# Solved Benchmarks},
        	ylabel={Synthesis Time (s)},
        ]
        \addplot table [color=blue, x=a, y=b] {dynamite.data};
        \addplot table [color=red, x=a, y=b] {enum.data};
        \end{axis}
        \end{tikzpicture}
        \caption{Comparison with enum}
    \end{subfigure}
    \begin{subfigure}[b]{0.23\textwidth}
        \centering
        \begin{tikzpicture}[scale=0.75]
        \begin{axis}[
            width=2.2in,
            ybar,
            bar width=10pt,
            legend cell align=left,
            legend entries={\toolname, \mitra},
            legend style={at={(0.5,1.17)}, legend columns=-1, anchor=north},
            x label style={below=0pt},
            y label style={below=5pt},
            x tick label style = {font=\scriptsize},
            ylabel={Synthesis Time (s)},
            symbolic x coords={Yelp,IMDB,DBLP,Mondial},
            xtick=data,
            enlarge x limits=0.2,
            ymin=0,
            ymax=80,
            nodes near coords,
            nodes near coords align={vertical},
        ]
        \addplot[color=blue,pattern=north east lines,pattern color=blue]
            table [x=a, y=b] {dynamite-mitra.data};
        \addplot[color=red,fill=pink]
            table [x=a, y=b] {mitra.data};
        \end{axis}
        \end{tikzpicture}
        \caption{Comparison with \mitra}
    \end{subfigure}
\vspace{5pt}
\caption{Comparing \toolname to baseline and \mitra.}
\label{fig:comparison}
\vspace{-10pt}
\end{figure}

\subsection{Comparison with Other Tools}\label{sec:mitra-comp}

While there is no existing programming-by-example (PBE) tool that supports the full diversity of source/target schemas handled by \toolname, we compare our approach against two other  tools, namely \mitra and \eirene, in two more specialized data migration scenarios. Specifically, \mitra~\cite{mitra-vldb18} is a PBE tool that automates document-to-relational transformations, whereas \eirene~\cite{eirene-pvldb11} infers relational-to-relational schema mappings from input-output examples.

\vspace{3pt}
\noindent \emph{\textbf{Comparison with Mitra.}} 
Since \mitra uses a domain-specific language that is customized for transforming tree-structured data into a tabular representation,  we compare \toolname against \mitra on the four data migration benchmarks from~\cite{mitra-vldb18} that involve conversion from a document  schema to a relational schema.  The results of this comparison   are summarized in Figure~\ref{fig:comparison}(b), which shows  synthesis time for each tool for all  four benchmarks. In terms of synthesis time, \toolname outperforms \mitra by roughly an order of magnitude:  in particular, \toolname takes an average of  $3$ seconds to solve these benchmarks, whereas \mitra needs  $29.4$ seconds.
Furthermore, \mitra synthesizes $559$ and $780$ lines of JavaScript for Yelp and IMDB, and synthesizes $134$ and $432$ lines of XSLT for DBLP and Mondial. In contrast, \toolname  synthesizes $13$ Datalog rules on average. These statistics suggest that the programs synthesized by \toolname are more easily readable compared to the JavaScript and XSLT programs synthesized by \mitra.
Finally, if we compare \toolname and \mitra in terms of efficiency of the synthesized programs, we observe that  \toolname-generated programs are $1.1$x faster.

\begin{filecontents}{dynamite-rel-time.data}
a    b
MLB-3	3.3
Airbnb-3	0.5
Patent-3	3.9
Bike-3	4.1
\end{filecontents}

\begin{filecontents}{eirene-time.data}
a    b
MLB-3	3.8
Airbnb-3	3.1
Patent-3	4.2
Bike-3	4.2
\end{filecontents}

\begin{filecontents}{dynamite-rel-quality.data}
a    b
MLB-3	0.5
Airbnb-3	0.0
Patent-3	0.4
Bike-3	0.2
\end{filecontents}

\begin{filecontents}{eirene-quality.data}
a    b
MLB-3	1.5
Airbnb-3	2.0
Patent-3	0.6
Bike-3	0.9
\end{filecontents}

\begin{figure}[!t]
\centering
    \begin{subfigure}[b]{0.23\textwidth}
        \centering
        \begin{tikzpicture}[scale=0.8]
        \begin{axis}[
            width=2.3in,
            ybar,
            legend cell align=left,
            legend entries={\toolname, \eirene},
            legend style={at={(0.5,1.16)}, font=\small, legend columns=-1, anchor=north},
            x label style={below=0pt},
            y label style={below=15pt},
            ylabel={Synthesis Time (s)},
            x tick label style={font=\scriptsize},
            symbolic x coords={MLB-3,Airbnb-3,Patent-3,Bike-3},
            enlarge x limits=0.15,
            xtick=data,
            ymin=0,
            ymax=6,
            nodes near coords,
            nodes near coords align={vertical},
        ]
        \addplot[color=blue,pattern=north east lines,pattern color=blue]
            table [x=a, y=b] {dynamite-rel-time.data};
        \addplot[color=red,fill=pink]
            table [x=a, y=b] {eirene-time.data};
        \end{axis}
        \end{tikzpicture}
        \caption{Synthesis Time}
    \end{subfigure}
    \begin{subfigure}[b]{0.23\textwidth}
        \centering
        \begin{tikzpicture}[scale=0.8]
        \begin{axis}[
            width=2.3in,
            ybar,
            legend cell align=left,
            legend entries={\toolname, \eirene},
            legend style={at={(0.5,1.16)}, font=\small, legend columns=-1, anchor=north},
            x label style={below=0pt},
            y label style={below=15pt},
            ylabel={Distance to Optimal},
            x tick label style={font=\scriptsize},
            symbolic x coords={MLB-3,Airbnb-3,Patent-3,Bike-3},
            enlarge x limits=0.15,
            xtick=data,
            ymin=0,
            ymax=2.5,
            nodes near coords,
            nodes near coords align={vertical},
        ]
        \addplot[color=blue,pattern=north east lines,pattern color=blue]
            table [x=a, y=b] {dynamite-rel-quality.data};
        \addplot[color=red,fill=pink]
            table [x=a, y=b] {eirene-quality.data};
        \end{axis}
        \end{tikzpicture}
        \caption{Mapping Quality}
    \end{subfigure}
\vspace{5pt}
\caption{Comparing \toolname against \eirene.}
\label{fig:comp-eirene}
\vspace{-15pt}
\end{figure}
\vspace{3pt}
\noindent \emph{\textbf{Comparison with Eirene.}} 
Since \eirene specializes in inferring relational-to-relational schema mappings, we compare \toolname against \eirene on the four relational-to-relational benchmarks from Section~\ref{sec:exp-main} using the same input-output  examples.
As shown in Figure~\ref{fig:comp-eirene}(a), \toolname is, on average, $1.3$x faster than \eirene in terms of synthesis time.  We also compare \toolname with \eirene in terms of the quality of inferred mappings using the same ``distance from optimal schema mapping metric'' defined in Section~\ref{sec:exp-main}.\footnote{To conduct this measurement, we  manually wrote optimal schema mappings in the formalism used by \eirene.} As shown in Figure~\ref{fig:comp-eirene}(b), the schema mappings synthesized by \toolname are closer to the optimal mappings than those synthesized by \eirene. In particular, \eirene-synthesized rules have $4.5$x more redundant body predicates than the \toolname-synthesized rules. 

\section{Related Work}

\noindent \emph{\textbf{Schema mapping formalisms.}} There are several different formalisms for  expressing schema mappings, including visual representations~\cite{survey-vldbj01, mweaver-sigmod12}, schema modification operators~\cite{prism-vldb08, smo-vldb13}, and declarative constraints~\cite{glav-sigmod11, tgd-pods10, eirene-pvldb11, Fagin-tcs05, Kolaitis-pods05, approx-tods17, skolem-sigmod13}. Some techniques require the user to express the schema mapping visually by drawing arrows between attributes in the source and target schemas~\cite{survey-vldbj01, clio-vldb02}. In contrast, schema modification operators express the schema mapping in a domain-specific language~\cite{prism-vldb08, smo-vldb13}. Another common approach is to express the schema mapping using declarative specifications, such as Global-Local-As-View (GLAV) constraints~\cite{glav-sigmod11, tgd-pods10, eirene-pvldb11, Fagin-tcs05}. Similar to this third category, we express the schema mapping using a declarative, albeit executable, formalism.

\vspace{3pt}
\noindent \emph{\textbf{Schema mapping inference.}} There is also a body of prior work on \emph{automatically inferring}  schema mappings~\cite{clio-vldb00, Yan-sigmod01, An-icde07, clio-essay09, clio-vldb02, Fuxman-vldb06, Gottlob-jacm10, glav-sigmod11, eirene-pvldb11, mweaver-sigmod12}.
\textsc{Clio}~\cite{clio-vldb02, Fuxman-vldb06}  infers schema mappings for relational and XML schemas given a value correspondence between atomic schema elements. Another line of work~\cite{Atzeni-vldbj08,Papotti-jwe05} uses model management operators such as ModelGen~\cite{modelgen-cidr03} to translate schemas from one model to another. In contrast, \toolname takes  examples as input, which are potentially easier to construct for non-experts. There are also several other schema mapping techniques that  use examples.
For instance,  \eirene~\cite{glav-sigmod11, eirene-pvldb11} interactively solicits examples to generate a GLAV specification.  \eirene  is restricted to relational-to-relational mappings and does not support data filtering, but their language can also express mappings that are not expressible in the Datalog fragment used in this work. 
Similarly, Bonifati et al. use example tuples to infer possible schema mappings and interact with the user via binary questions to refine the inferred mappings~\cite{Bonifati-sigmod17}. In contrast to \toolname, \cite{Bonifati-sigmod17} only focuses on relational-to-relational mappings~\cite{glav-sigmod11}.
Finally, \mweaver~\cite{mweaver-sigmod12}  provides a GUI to help  users to interactively generate attribute correspondences  based on  examples.  \mweaver is also restricted  to relational-to-relational mappings and disallows numerical attributes in the source database for performance reasons. Furthermore, it requires the entire source  instance to perform  mapping inference.

\vspace{3pt}
\noindent \emph{\textbf{Program synthesis for data transformation.}}
There has  been significant work on automating data transformations using program synthesis~\cite{mitra-vldb18, morpheus-pldi17, trinity-vldb19, progFromEx-pldi11, hades-pldi16, scythe-pldi17, blinkfill-vldb16, rulesynth-vldb17}. Many techniques focus purely on table or string transformations~\cite{morpheus-pldi17, progFromEx-pldi11, scythe-pldi17, trinity-vldb19, blinkfill-vldb16}, whereas {\sc Hades}~\cite{hades-pldi16} (resp. \mitra~\cite{mitra-vldb18}) focuses on document-to-document (resp. document-to-table) transformations. Our technique generalizes prior work  by automating transformations between many different types of database schemas. Furthermore, as we demonstrate in Section~\ref{sec:mitra-comp}, this generality does not come at the cost of practicality, and, in fact, performs faster synthesis.

\vspace{3pt}
\noindent \emph{\textbf{Inductive logic programming.}}
Our work is related to inductive logic programming (ILP) where the goal is  to synthesize a logic program  consistent with a   set of examples ~\cite{Muggleton-ml14, metagol-ml15, ntp-nips17, ilp-cacm15, Lin-ecai14, neuralLP-nips17, partialILP-ijcai18, difflog-ijcai19}.  Among ILP techniques, our work is most similar to  recent work on Datalog program synthesis~\cite{zaatar-cp17, alps-fse18}. In particular, \zaatar~\cite{zaatar-cp17} encodes an under-approximation of Datalog semantics using the theory of arrays and reduces   synthesis to SMT solving.
However, this technique imposes an upper bound on the number of clauses and atoms  in the Datalog program.
The \alps tool~\cite{alps-fse18} also performs Datalog program synthesis from examples but additionally requires  meta-rule templates. In contrast, our  technique focuses on a recursion-free subset of Datalog, but it does not require additional user input beyond examples and learns from failed synthesis attempts by using the concept of minimal distinguishing projections.

\vspace{3pt}
\noindent \emph{\textbf{Learning from conflicts in synthesis.}}
Our method bears  similarities to recent work on conflict-driven learning in program synthesis~\cite{neo-pldi18, trinity-vldb19, migrator-pldi19} where the goal is to learn useful information from failed synthesis attempts. For example, \neo~\cite{neo-pldi18} and \trinity~\cite{trinity-vldb19} use component specifications to perform root cause analysis and identify other programs that also cannot satisfy the specification. \toolname's sketch completion approach is based on a similar insight, but it uses  Datalog-specific techniques to perform inference. Another related work is \migrator~\cite{migrator-pldi19}, which automatically synthesizes a new version of a SQL program given its old version and a new relational schema. In contrast to \migrator, our method addresses the problem of migrating \emph{data} rather than \emph{code} and is not limited to relational schemas. In addition, while \migrator also aims to learn from failed synthesis attempts, it does so using testing as opposed to MDP analysis for Datalog.

\vspace{3pt}
\noindent \emph{\textbf{Universal and core solutions for data exchange.}}
Our work is related to the data exchange problem~\cite{Fagin-tcs05}, where the goal is to construct a target instance $J$ given a source instance $I$ and a schema mapping $\Sigma$ such that $(I, J) \models \Sigma$. Since such a solution $J$ is not unique, researchers have developed the concept of universal and core solutions to characterize generality and compactness~\cite{Fagin-tcs05,Fagin-tods05}.
In contrast, during its data migration phase, \toolname obtains a unique target instance by executing the synthesized Datalog program on the source instance. The target instance generated by \toolname is the least Herbrand model of the Datalog rules and the source instance~\cite{datalog-tkde89}. While the least Herbrand model also characterizes generality and compactness of the target instance, the relationship between the least Herbrand model and the universal/core solution for data exchange requires further theoretical investigation.


 \vspace{-5pt}
\section{Limitations} \label{sec:limit}

Our approach has three limitations that we plan to address in future work. First, our synthesis technique does not provide any guarantees about the optimality of the synthesized Datalog programs, either in terms of performance or size.  Second, we assume that the  examples provided by the user are always correct; thus, our method does not handle any noise in the specification. 
Third, we assume that we can compare string values for equality when inferring the attribute mapping and obtain the proper matching using set containment. If the values are slightly changed, or if there is a different matching heuristic between attributes, our technique would not be able to synthesize the desired program. However, this shortcoming can be overcome by more sophisticated schema matching techniques~\cite{Doan-sigmod01,cupid-vldb01}.

\section{Conclusion} \label{sec:concl}

We  have proposed a new PBE technique that can synthesize Datalog programs to automate  data migration tasks. 
We evaluated our tool, \toolname, on $28$ benchmarks that involve migrating data between  different types of database schemas and showed that \toolname can successfully automate the desired data migration task from small input-output examples.

\section{Acknowledgments}

We would like to thank the anonymous reviewers and our shepherd for their useful comments and suggestions. This work is supported by NSF Awards \#1712067 and \#1762299.

\bibliographystyle{abbrv}
\bibliography{main}
\balance


\ifextended{
\appendix
\section{Theorems and Proofs}

\begin{definition}[Equivalence of Datalog Programs] \label{def:datalog-equiv}
Given two Datalog programs $\prog_1$ and $\prog_2$, $\prog_1$ and $\prog_2$ are said to be \emph{equivalent}, denoted by $\prog_1 \simeq \prog_2$, if and only if $\denot{\prog_1}_{\ein} = \denot{\prog_2}_{\ein}$ holds for any input instance $\ein$, i.e.
\[
    \prog_1 \simeq \prog_2 \triangleq \forall \ein.~ \denot{\prog_1}_{\ein} = \denot{\prog_2}_{\ein}
\]
\end{definition}

\begin{proof}[of Theorem~\ref{thm:substitution}]
Suppose $\denot{\prog} = \forall X. \Phi(X)$. Since $\hat{\sigma}$ is an injective mapping from variables $X$ to variables $Y$, we have $\denot{\prog\hat{\sigma}} = (\forall X.\Phi(X))[Y/X] = \forall Y. \Phi(Y) = \denot{\prog}$. Based on the semantics of Datalog programs, we have program $\prog$ is equivalent to $\prog\hat{\sigma}$, i.e. $\prog \simeq \prog\hat{\sigma}$.
\end{proof}

\begin{lemma}[Correctness of Generalize] \label{lem:generalize-correctness}
Given two models $\model, \model'$ with the same domain, if $\model' \models Generalize(\model)$, then (1) $\model' = \model \hat{\sigma}$ where $\hat{\sigma}$ is an injective substitution, and (2) for any $x \in \dom(\model)$, if $\sigma(x)$ corresponds to a head variable, then $\model(x) = \model'(x)$.
\end{lemma}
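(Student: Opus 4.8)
The plan is to prove part (2) first and then bootstrap part (1) from it together with the syntactic shape of $\emph{Generalize}(\model)$. Part (2) is essentially immediate from the definition of $\alpha$: if $\model(x)$ is a head variable, then $\alpha(x,\model)$ is by construction the single conjunct $x = \model(x)$, which is one of the conjuncts of $\emph{Generalize}(\model)$; since $\model' \models \emph{Generalize}(\model)$, we get $\model'(x) = \model(x)$.

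For part (1), the plan is to define $\hat{\sigma}$ on the range of $\model$ by $\hat{\sigma}(\model(x)) := \model'(x)$ and then verify that this is well-defined and injective. Well-definedness requires that $\model(x) = \model(y)$ imply $\model'(x) = \model'(y)$: if the common value is a head variable this is exactly part (2); otherwise $\alpha(x,\model)$ is the ``otherwise'' branch and contains the conjunct $x \star y$ with $\star$ instantiated to $=$ (since $\model(x) = \model(y)$), so $\model' \models x = y$. Injectivity requires that $\model(x) \neq \model(y)$ imply $\model'(x) \neq \model'(y)$: if at least one of $\model(x), \model(y)$ — say $\model(x)$ — is not a head variable, then $\alpha(x,\model)$ contains $x \star y$ with $\star$ instantiated to $\neq$, giving $\model'(x) \neq \model'(y)$; and if both are head variables, part (2) already yields $\model'(x) = \model(x) \neq \model(y) = \model'(y)$. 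Then $\model' = \model\hat{\sigma}$ holds by construction of $\hat\sigma$, and $\hat{\sigma}$ extends to a full injective substitution that fixes all head and connecting variables (consistent with part (2)), so that $\prog\hat{\sigma}$ is precisely the program instantiated from $\model'$ and Theorem~\ref{thm:substitution} applies.

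The one subtle point — and the only place where the two parts of the lemma genuinely interact — is the injectivity sub-case in which both $\model(x)$ and $\model(y)$ are head variables: here $\emph{Generalize}(\model)$ deliberately emits no $\star$ constraint relating $x$ and $y$ beyond pinning each to its own value, so the required disequality of $\model'(x)$ and $\model'(y)$ must be recovered from part (2) rather than from the ``otherwise'' branch of $\alpha$. Everything else is a routine unfolding of the definitions of $\emph{Generalize}$ and $\alpha$, relying only on the hypothesis that $\model$ and $\model'$ share a common domain, which is what makes each conjunct of $\emph{Generalize}(\model)$ a meaningful constraint on $\model'$.
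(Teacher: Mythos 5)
Your proof is correct and follows essentially the same route as the paper's: part (2) is read off from the definition of $\alpha$, and part (1) reduces to showing the induced substitution $\hat{\sigma}$ is injective, which both arguments derive from the $\star$-(dis)equalities that $\emph{Generalize}(\model)$ emits. You are in fact more careful than the paper on two points it elides --- the well-definedness of $\hat{\sigma}$ (the paper simply assumes $\model' = \model\hat{\sigma}$ and only argues injectivity by contradiction) and the injectivity sub-case where both values are head variables, where no $\star$ constraint relates the two holes and one must fall back on part (2) --- so your write-up is a tightened version of the published argument rather than a departure from it.
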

\begin{proof}
Since (2) holds by the definition of $\emph{Generalize}(\model)$, let us focus on property (1). Given that $\model$ and $\model'$ have the same domain and $\model' = \model \hat{\sigma}$, we only need to prove $\hat{\sigma}$ is injective. Let us prove it by contradiction. Suppose $\hat{\sigma}$ is not injective and maps two different values $v_1, v_2 (v_2 \neq v_1)$ in the range of $\model$ to one single value $v_3$, i.e. $\hat{\sigma}(v_1) = \hat{\sigma}(v_2) = v_3$. Since $\model' = \model \hat{\sigma}$, we have $v_3$ in the range of $\model'$. On one hand, suppose $\model(x_1) = v_1, \model(x_2) = v_2$, it follows that $\model'(x_1) = \model(x_1) \hat{\sigma} = v_3$ and $\model'(x_2) = \model(x_2) \hat{\sigma} = v_3$, so we have $\model'(x_2) = \model'(x_1)$. On the other hand, we know $\model'(x_2) \neq \model'(x_1)$ from the definition of $\emph{Generalize}(\model)$ because $v_1 \neq v_2$, which is a contradiction.
\end{proof}

\begin{definition}[Projection] \label{def:projection}
Let $R$ be a non-recursive rule with head $H(X)$ where $X$ is a set of variables that correspond to attributes $A$ of relation $H$. Assume that $L \subseteq A$ is a subset of attributes in $H$ and $X_L$ are their corresponding variables, rule $R(L)$ that substitutes head $H(X)$ with $H(X_L)$ in $R$ is said to be the \emph{projection} of $R$.
\end{definition}

\begin{lemma}[Projection Property] \label{lem:projection}
Assuming $R$ is a non-recursive rule and $R(L)$ is its projection on attributes $L$, it holds that $\denot{R(L)}_{\ein} = \Pi_{L}(\denot{R}_{\ein})$ for any input $\ein$.
\end{lemma}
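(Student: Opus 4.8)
The plan is to unfold the least-Herbrand-model semantics of a single non-recursive rule and to reduce both sides of the claimed equation to the \emph{same} set of body-satisfying substitutions. First I would fix notation: write $R$ as $H(X) \colondash B_1, \ldots, B_k$, where $X = (x_1, \ldots, x_n)$ are the head variables corresponding to attributes $A = (a_1, \ldots, a_n)$ and $Y$ collects the remaining body-only variables. Because $R$ is non-recursive, the symbol $H$ does not occur among $B_1, \ldots, B_k$, so evaluating $R$ on $\ein$ requires only a single derivation step:
\[
\denot{R}_{\ein} = \{\, H(\theta(x_1), \ldots, \theta(x_n)) \mid \theta : X \cup Y \to \dom(\ein),\ \ein \models B_1\theta \land \cdots \land B_k\theta \,\}.
\]
Let $\Theta$ denote the set of substitutions $\theta$ that satisfy the body on $\ein$.

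Next I would apply the identical unfolding to $R(L)$. By the definition of projection (Definition~\ref{def:projection}), $R(L)$ is $H(X_L) \colondash B_1, \ldots, B_k$, where $X_L$ is the subtuple of $X$ selected by $L \subseteq A$; crucially it has the \emph{same} body as $R$, hence the \emph{same} satisfying set $\Theta$. Therefore $\denot{R(L)}_{\ein} = \{\, H(\theta(X_L)) \mid \theta \in \Theta \,\}$. The remaining step is a routine set-level calculation: by the definition of relational projection, $\Pi_L(\denot{R}_{\ein}) = \{\, \Pi_L(t) \mid t \in \denot{R}_{\ein} \,\} = \{\, \Pi_L(\theta(X)) \mid \theta \in \Theta \,\}$, and since $\Pi_L$ merely keeps the components indexed by $L$ we have $\Pi_L(\theta(X)) = \theta(X_L)$ for every $\theta \in \Theta$. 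Chaining these three equalities yields $\denot{R(L)}_{\ein} = \Pi_L(\denot{R}_{\ein})$, and since $\ein$ was arbitrary the lemma follows.

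The main obstacle is not conceptual depth but getting the semantic bookkeeping right. In particular, one must justify the ``one-step'' denotation displayed above, which is exactly where the non-recursiveness hypothesis is essential: otherwise the fixpoint iteration could feed $H$-facts back into the body and the two evaluations would have to be compared stage by stage rather than in one shot. One should also make explicit that the head variables $X$ are assumed to be distinct, so that $\Pi_L(\theta(X)) = \theta(X_L)$ holds literally; if heads were allowed to contain repeated variables or constants, the projection correspondence would need an extra argument. Finally, I would remark that this lemma is precisely what licenses the {\sc Analyze} procedure to reason about minimal distinguishing projections on a per-rule, per-attribute basis.
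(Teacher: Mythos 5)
Your proof is correct and follows essentially the same route as the paper's: both write $R$ and $R(L)$ as rules sharing the identical body and conclude the equality directly from Datalog semantics, with your version merely making explicit the set $\Theta$ of body-satisfying substitutions that the paper's one-line appeal to ``the semantics of Datalog'' leaves implicit. Your added remarks on where non-recursiveness and distinctness of head variables are actually used are sensible refinements rather than a different argument.
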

\begin{proof}
Without loss of generality, suppose rule $R$ is of the form $H(X) \colondash B(X, Y)$. Then $\denot{R} = \forall X, Y.~ B(X, Y) \rightarrow H(X)$. Based on the definition of $R(L)$, rule $R(L)$ is of the form $H(X_L) \colondash B(X, Y)$, so $\denot{R(L)} = \forall X, Y. B(X, Y) \rightarrow H(X_L)$. According to the semantics of Datalog, we have $\denot{R(L)}_{\ein} = \Pi_L(\denot{R}_{\ein})$ for any input $\ein$.
\end{proof}

\begin{lemma}[Correctness of Generalize with MDP] \label{lem:generalize-mdp-correctness}
Consider a program sketch $\sketch$, two models $\model, \model'$, and an MDP $\mdp$. Let $\prog = \textsf{Instantiate}(\sketch, \model)$ and $\prog' = \textsf{Instantiate}(\sketch, \model')$. If $\model' \models Generalize(\model, \mdp)$, then $\Pi_{\mdp}(\denot{\prog}_{\ein}) = \Pi_{\mdp}(\denot{\prog'}_{\ein})$ holds for any input $\ein$.
\end{lemma}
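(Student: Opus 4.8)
The plan is to reduce the claim about projected outputs to a statement about semantic equivalence of two \emph{projected} Datalog programs, and then to relate those programs by an injective variable renaming so that Theorem~\ref{thm:substitution} applies. Concretely, let $\prog(\mdp)$ denote the program obtained by projecting every rule of $\prog$ onto the attribute set $\mdp$ in the sense of Definition~\ref{def:projection}, and likewise $\prog'(\mdp)$ for $\prog'$. By Lemma~\ref{lem:projection} we have $\Pi_{\mdp}(\denot{\prog}_{\ein}) = \denot{\prog(\mdp)}_{\ein}$ and $\Pi_{\mdp}(\denot{\prog'}_{\ein}) = \denot{\prog'(\mdp)}_{\ein}$ for every input $\ein$, so it suffices to prove $\prog(\mdp) \simeq \prog'(\mdp)$.

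To establish this equivalence I would introduce the substitution $\hat{\sigma}$ on sketch variables defined by $\hat{\sigma}(\model(x_i)) = \model'(x_i)$ for every hole variable $x_i \in \dom(\model)$, and $\hat{\sigma} = \mathrm{id}$ on every other variable of $\sketch$ (the head variables and the fresh ``connector'' variables occurring at fixed, non-hole positions). The three facts needed are: (i) $\hat{\sigma}$ is well defined; (ii) $\hat{\sigma}$ is injective; and (iii) $\hat{\sigma}$ is the identity on the head variables of $\prog(\mdp)$, i.e. on the variables corresponding to the attributes in $\mdp$. All three follow from the shape of $Generalize(\model, \mdp)$ by the same case analysis used in the proof of Lemma~\ref{lem:generalize-correctness}, now driven by the predicate ``$\model(x_i) \in \mdp$'' in place of ``$\model(x_i)$ is a head variable.'' For (iii): whenever $\model(x_i) \in \mdp$ the conjunct $\alpha(x_i, \model, \mdp)$ is exactly $x_i = \model(x_i)$, so $\model' \models Generalize(\model, \mdp)$ forces $\model'(x_i) = \model(x_i)$; hence $\hat{\sigma}$ fixes every MDP-attribute variable, whether it occurs literally in a head position or is supplied through a hole. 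For (i) and (ii): if $\model(x_i) = \model(x_j)$ then, depending on whether these values lie in $\mdp$, either both conjuncts are equalities to the same variable or the conjunct $x_i \star x_j$ is an equality, so in either case $\model'(x_i) = \model'(x_j)$; symmetrically, if $\model(x_i) \neq \model(x_j)$ the relevant conjunct forces $\model'(x_i) \neq \model'(x_j)$, which gives injectivity, using that a value outside $\mdp$ always falls in the ``otherwise'' branch and therefore carries its full $\star$-pattern against every other hole.

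With $\hat{\sigma}$ in hand I would verify that $\prog'(\mdp) = \prog(\mdp)\hat{\sigma}$: the heads of $\prog(\mdp)$ and $\prog'(\mdp)$ mention only MDP-attribute variables, which $\hat{\sigma}$ fixes by (iii); each hole position of the body is filled with $\model'(x_i) = \hat{\sigma}(\model(x_i))$ by construction; and all remaining fixed body positions are untouched, where $\hat{\sigma}$ is the identity. Since $\hat{\sigma}$ is injective by (ii), Theorem~\ref{thm:substitution} yields $\prog(\mdp) \simeq \prog(\mdp)\hat{\sigma} = \prog'(\mdp)$, and composing with Lemma~\ref{lem:projection} gives $\Pi_{\mdp}(\denot{\prog}_{\ein}) = \Pi_{\mdp}(\denot{\prog'}_{\ein})$ for all $\ein$, as required.

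The main obstacle I anticipate is bookkeeping rather than conceptual: making precise how holes that happen to be assigned head variables interact with head variables occurring at fixed positions, so that clause (iii) really pins down all of the head of $\prog(\mdp)$; and stating the rule-wise projection $\prog(\mdp)$ cleanly when $\mdp$ may touch attributes of more than one relation. Once those details are set up, the equivalence argument is a direct specialization of Lemma~\ref{lem:generalize-correctness} combined with Theorem~\ref{thm:substitution} and Lemma~\ref{lem:projection}.
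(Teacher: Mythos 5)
Your proposal is correct and follows essentially the same route as the paper's proof: reduce to the projected programs $\prog(\mdp)$ and $\prog'(\mdp)$ via Lemma~\ref{lem:projection}, relate them by an injective variable renaming fixing the MDP-attribute variables, and conclude equivalence with Theorem~\ref{thm:substitution}. The only cosmetic difference is that you re-derive the renaming properties by specializing the case analysis to ``$\model(x)\in\mdp$,'' whereas the paper obtains them by observing that $\emph{Generalize}(\model,\mdp)$ on $\prog$ coincides with $\emph{Generalize}(\model)$ on $\prog(\mdp)$ and then citing Lemma~\ref{lem:generalize-correctness} directly.
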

\begin{proof}
Let $\eout = \denot{\prog}_{\ein}$, $\eout' = \denot{\prog'}_{\ein}$ and consider the projection $\prog(\mdp)$ of $\prog$ and projection $\prog'(\mdp)$ of $\prog'$. Observe the difference between $\emph{Generalize}(\model, \mdp)$ and $\emph{Generalize}(\model)$, we know that projection $\prog(\mdp) = \textsf{Instantiate}(\sketch, \model)$ and projection $\prog'(\mdp) = \textsf{Instantiate}(\sketch, \model')$. Furthermore, applying $\emph{Generalize}(\model, \mdp)$ on program $\prog$ is equivalent to applying $\emph{Generalize}(\model)$ on program $\prog(\mdp)$, because the head variables of $\prog(\mdp)$ is the same as variables corresponding to attributes in $\mdp$ for $\prog$. Then by Lemma~\ref{lem:generalize-correctness}, it holds that (1) $\model' = \model \hat{\sigma}$ where $\hat{\sigma}$ is an injective substitution, and (2) for any $x \in \dom(\model)$, if $\model(x)$ corresponds to a head variable of $\prog(\mdp)$, then $\model(x) = \model'(x)$. Thus, given that $\prog'(\mdp) = \prog(\mdp) \hat{\sigma}$, we know $\prog(\mdp) \simeq \prog'(\mdp)$ by Theorem~\ref{thm:substitution}. In other words, $\denot{\prog(\mdp)}_{\ein} = \denot{\prog'(\mdp)}_{\ein}$ holds for any input $\ein$. Therefore, by Lemma~\ref{lem:projection}, we have $\Pi_{\mdp}(\denot{\prog}_{\ein}) = \denot{\prog(\mdp)}_{\ein} = \denot{\prog'(\mdp)}_{\ein} = \Pi_{\mdp}(\denot{\prog'}_{\ein})$ for any input $\ein$.
\end{proof}

\begin{lemma}[Correctness of \textsc{MDPSet}] \label{lem:mdpset-correctness}
Given a Datalog program $\prog$ and an input-output example $(\ein, \eout)$, suppose $\denot{\prog}_{\ein} = \eout'$ and $\eout' \neq \eout$. If $\mdpSet = \textsc{MDPSet}(\eout', \eout)$, then for every element $\mdp \in \mdpSet$, it holds that $\mdp$ is an MDP for program $\prog$ and example $(\ein, \eout)$.
\end{lemma}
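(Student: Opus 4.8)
The plan is to show that every $L \in \mdpSet$ satisfies the two defining conditions of an MDP for $\prog$ and $(\ein,\eout)$, using the fact that $\eout' = \denot{\prog}_{\ein}$. Condition~(1) is immediate: an element is added to $\mdpSet$ only in the \textsf{ElsIf} branch of Algorithm~\ref{algo:mdpset}, which is guarded by $\Pi_L(\eout') \neq \Pi_L(\eout)$; substituting $\eout' = \denot{\prog}_{\ein}$ gives $\Pi_L(\eout) \neq \Pi_L(\denot{\prog}_{\ein})$. The real work is condition~(2), namely that $\Pi_{L'}(\eout') = \Pi_{L'}(\eout)$ for every $L' \subsetneq L$ (the case $L' = \emptyset$ being trivial).

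To organize the argument I would fix some terminology: say $L$ \emph{distinguishes} if $\Pi_L(\eout') \neq \Pi_L(\eout)$, and call a nonempty attribute set $L$ \emph{good} if no proper subset of $L$ distinguishes. The heart of the proof is an auxiliary claim: \emph{every good set is eventually enqueued into $\worklist$}. I would prove this by induction on $|L|$. Singletons are enqueued in the initialization loop (lines 3--5). For $|L| = k \ge 2$, pick any $a \in L$ and set $M = L \setminus \{a\}$; then $M$ is good (its proper subsets are proper subsets of $L$) and has size $k-1$, so by the induction hypothesis $M$ is enqueued. Since only finitely many attribute subsets exist and $\visited$ prevents re-enqueuing, the \textbf{while} loop terminates and every enqueued set is dequeued; when $M$ is dequeued, $\Pi_M(\eout') = \Pi_M(\eout)$ holds because $M \subsetneq L$ and $L$ is good, so the algorithm enters the \textbf{then} branch (line 8) and enqueues $L = M \cup \{a\}$ unless it was already visited. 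Either way $L$ ends up in $\worklist$.

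With the claim in hand I would establish condition~(2) by contradiction. Suppose some $L' \subsetneq L$ distinguishes, and take $L'$ of minimal size among distinguishing subsets of $L$; then $L'$ is good and distinguishes. By the claim $L'$ is enqueued, hence dequeued; and since the queue is FIFO and every set that is enqueued from another (line 10) has strictly larger size than its parent while all initial enqueues have size $1$, all sets of size $|L'|$ are dequeued before any set of size $|L|$, so $L'$ is processed before $L$. When $L'$ is dequeued it fails the test on line 8, so the \textsf{ElsIf} fires: either some $L'' \in \mdpSet$ already satisfies $L'' \subseteq L'$, or $L'$ itself is added. Since $\mdpSet$ only grows, by the time $L$ is processed there is an $L'' \in \mdpSet$ with $L'' \subseteq L' \subsetneq L$, hence $L'' \subsetneq L$; but then the guard $\nexists L'' \in \mdpSet.\ L'' \subseteq L$ on line 14 is false, so $L$ is never added to $\mdpSet$, contradicting $L \in \mdpSet$. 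Therefore no proper subset of $L$ distinguishes, which is condition~(2); together with condition~(1) this makes $L$ an MDP.

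I would expect the delicate part to be the coordination between the breadth-first exploration order and the monotonically growing $\mdpSet$ guard: one must argue carefully that FIFO processing makes smaller sets precede larger ones and that $\mdpSet$ never loses an element, so that a would-be smaller distinguishing subset of $L$ is guaranteed to block $L$'s insertion. The auxiliary ``good $\Rightarrow$ enqueued'' claim, while conceptually the backbone, is a routine structural induction once termination of the loop is noted.
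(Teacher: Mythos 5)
Your proof is correct, and its overall skeleton matches the paper's: condition (1) of the MDP definition falls out of the guard on the \textsf{ElsIf} branch, and condition (2) is argued by contradiction through the subsumption check $\nexists L'' \in \mdpSet.~L'' \subseteq L$ at line 14. Where you genuinely diverge is in supplying the auxiliary ``good $\Rightarrow$ enqueued'' claim together with the FIFO level-ordering argument. The paper's proof instead asserts, for a distinguishing $A \subsetneq L$ with $A \notin \mdpSet$, that ``it must hold that $\exists A' \subset A.~A' \in \mdpSet$ by the condition at Line 14'' --- but this is exactly the step that needs justification, since a distinguishing set $A$ is never even enqueued unless all of its BFS ancestors pass the non-distinguishing test, so $A$ itself may never reach line 14. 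Your reachability induction (every set whose proper subsets all fail to distinguish is eventually enqueued, hence dequeued before any strictly larger set) applied to a \emph{minimal} distinguishing subset $L'$ of $L$ closes precisely this gap: it guarantees that by the time $L$ is dequeued, $\mdpSet$ already contains some $L'' \subseteq L' \subsetneq L$ that blocks $L$'s insertion. In short, the two proofs share the same decomposition, but yours makes explicit (and correct) an implicit assumption about which sets the breadth-first search actually visits; the cost is the extra induction and the queue-monotonicity observation, which are routine once stated.
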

\begin{proof}
Observe that Lines 2 -- 5 of the \textsc{MDPSet} procedure in Algorithm~\ref{algo:mdpset} establish the facts: (1) $\mdpSet = \emptyset$, (2) $\visited = \set{\set{a} ~|~ a \in \emph{Attrs}(\eout')}$, and (3) $\worklist$ is a queue initialized to contain all the elements in $\visited$. Now consider the following invariants of the while loop at Lines 6 -- 14:
\begin{itemize}
    \item[$I_1$.] $\forall \mdp \in \mdpSet.~\Pi_{\mdp}(\eout) \neq \Pi_{\mdp}(\denot{\prog}_{\ein})$
    \item[$I_2$.] $\forall \mdp \in \mdpSet.~\forall A \subset \mdp.~\Pi_A(\eout) = \Pi_A(\denot{\prog}_{\ein})$
\end{itemize}
Then let us prove they are loop invariants in more detail:
\begin{itemize}
\item
$I_1$ trivially holds at the beginning of the loop, because $\mdpSet$ is empty. To see why $I_1$ holds after every iteration of the loop, observe Line 8 and Line 14 of the algorithm. A new element $L$ can only be added to $\mdpSet$ if $\Pi_L(\eout') \neq \Pi_L(\eout)$. Since $\denot{\prog}_{\ein} = \eout'$, we have $\Pi_L(\eout) \neq \Pi_L(\denot{\prog}_{\ein})$. Thus, $I_1$ holds after every iteration of the loop.
\item
$I_2$ trivially holds at the beginning of the loop, because $\mdpSet$ is empty. Let us prove $I_2$ holds after every iteration of the loop. For an element $L$ to be added to $\mdpSet$ at Line 14, we only need to prove $\forall L' \subset L.~\Pi_{L'}(\eout) = \Pi_{L'}(\denot{\prog}_{\ein})$. We now prove it by contradiction. Suppose there exists an attribute set $A \subset L$ such that $\Pi_A(\eout) \neq \Pi_A(\denot{\prog}_{\ein})$. (i) If $A \in \mdpSet$, then $L$ should not be added to $\mdpSet$ by the condition at Line 14. Contradiction. (ii) If $A \notin \mdpSet$, it must hold that $\exists A' \subset A.~A' \in \mdpSet$ by the condition at Line 14. Since $A' \subset A \subset L$ and $A' \in \mdpSet$, we know $L$ should not be added to $\mdpSet$. Contradiction.
\end{itemize}
Based on the definition of MDP and loop invariants $I_1$ and $I_2$, we have proved that every element $\mdp \in \mdpSet$ is an MDP for program $\prog$ and example $(\ein, \eout)$.
\end{proof}

\begin{proof}[of Theorem~\ref{thm:prune-safety}]
Given an input-output example $(\ein, \eout)$ and a candidate program $\prog$ with $\denot{\prog}_{\ein} = \eout'$, observe that the \textsc{Synthesize} procedure in Algorithm~\ref{algo:synthesis} can only reach \textsc{Analyze} when  $\eout' \neq \eout$. Thus by Lemma~\ref{lem:mdpset-correctness}, the return value $\mdpSet$ of \textsc{MDPSet} in Algorithm~\ref{algo:analyze} contains a set of MDPs. For each MDP $\mdp \in \mdpSet$, observe that $\psi = \emph{Generalize}(\model, \mdp)$ according to Lines 5 -- 10 of the \textsc{Analyze} procedure in Algorithm~\ref{algo:analyze}, so it holds that $\model' \models \emph{Generalize}(\model, \mdp)$ for a model $\model'$ corresponding to formula $\psi$. Let us denote the program corresponding to $\model'$ by $\prog'$. Then by Lemma~\ref{lem:generalize-mdp-correctness}, we have $\Pi_{\mdp}(\denot{\prog}_{\ein}) = \Pi_{\mdp}(\denot{\prog'}_{\ein})$. Since $\denot{\prog}_{\ein} = \eout'$ and $\eout' \neq \eout$, we know $\denot{\prog'}_{\ein} \neq \eout$, i.e. $\prog'$ is an incorrect program. Given that the return value of \textsc{Analyze} is the conjunction of all $\neg \psi$, we have proved the theorem.
\end{proof}

\begin{definition}[Sketch Completion] \label{def:sketch-completion}
Given a sketch $\sketch$ with holes $\vec{\emph{\hole}}$, a program $\prog$ is said to be the \emph{completion} of $\sketch$, denoted by $\prog \in \gamma(\sketch)$, if $\prog = \sketch[\vec{c}/\vec{\emph{\hole}}]$ where $c_i$ is a constant in the domain of $\emph{\hole}_i$.
\end{definition}

\begin{lemma}[Correctness of GenIntensionalPreds] \label{lem:intensional-correctness}
Given a top-level record type $\name$ in target schema $\schema'$, invoking $\textsc{GenIntensionalPreds}(\schema', \name)$ generates the following relations:
\begin{enumerate}
    \item relation $R_{\name}(v_{a_1}, \ldots, v_{a_n})$ for record type $\name$, where $v_{a_i} $ is the variable corresponding to attribute $a_i$ of $\name$
    \item relation $R_{\name'}(v_{\name'}, v_{a_1}, \ldots, v_{a_n})$ for each nested record type $\name'$ in $\name$, where $v_{\name'}$ is the variable corresponding to $\name'$ and $v_{a_i} $ corresponds to attribute $a_i$ of $\name'$
\end{enumerate}
\end{lemma}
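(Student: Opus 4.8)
The plan is to prove the statement by structural induction on the derivation of the judgment $\schema' \vdash \name \leadsto (v, H)$ produced by \textsc{GenIntensionalPreds}; this induction is well-founded because schemas are non-recursive, so the nesting of record types forms a finite tree. To make the induction go through I would first strengthen the claim so that it applies to \emph{every} record type $M$ reachable from $\name$, not just to top-level ones: if $\schema' \vdash M \leadsto (v_M, H)$ is derivable then (i) $H$ contains $R_M(v_{a_1}, \ldots, v_{a_k})$ when $\neg\emph{isNested}(M)$ and $R_M(v_M, v_{a_1}, \ldots, v_{a_k})$ when $\emph{isNested}(M)$, where $a_1, \ldots, a_k$ are the attributes of $M$ and $v_{a_i}$ is the variable generated for $a_i$; (ii) for every record type $M'$ transitively nested in $M$, $H$ contains $R_{M'}(v_{M'}, v_{b_1}, \ldots, v_{b_l})$ with $b_1, \ldots, b_l$ the attributes of $M'$; and (iii) $H$ contains no other predicates. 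The lemma follows by instantiating the strengthened claim with $M = \name$ and observing that $\neg\emph{isNested}(\name)$ holds because $\name$ is top-level.

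Next I would do the case analysis on the last rule applied in the derivation. The base case is \textrm{InPrim}, where $\schema'(M)$ is a primitive type: the generated predicate set is $\emptyset$, which matches the strengthened claim vacuously since a primitive attribute nests no record types, and the returned variable is the fresh $v_M$ that the parent will use. In the inductive cases \textrm{InRec} and \textrm{InRecNested} we have $\schema'(M) = \set{a_1, \ldots, a_n}$ together with sub-derivations $\schema' \vdash a_i \leadsto (v_i, H_i)$. Applying the induction hypothesis to each $a_i$: if $a_i$ is primitive then $H_i = \emptyset$ and $v_i$ is the fresh variable we name $v_{a_i}$; if $a_i$ is a record type then $\emph{isNested}(a_i)$ holds (it is nested in $M$), so \textrm{InRecNested} was used for it, $v_i = v_{a_i}$ is the connecting variable, and $H_i$ consists exactly of $R_{a_i}(v_{a_i}, \ldots)$ plus $R_{M''}(\ldots)$ for every $M''$ transitively nested in $a_i$. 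The rule then adds the single new predicate $R_M(v_1, \ldots, v_n)$ (rule \textrm{InRec}) or $R_M(v_M, v_1, \ldots, v_n)$ (rule \textrm{InRecNested}); rewriting $v_i$ as $v_{a_i}$ this is precisely the predicate demanded by (i). Forming the union $\set{R_M(\ldots)} \cup \bigcup_i H_i$ and noting that the record types transitively nested in $M$ are exactly $M$'s record-typed attributes together with those transitively nested in each of them, properties (ii) and (iii) also hold.

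The step I expect to be the main obstacle is the bookkeeping around variable identities: one must verify that the variable $v_i$ returned by the recursive call on attribute $a_i$ is simultaneously the $i$-th argument of the parent relation $R_M$ and, when $a_i$ is a record type, the leading ``parent-link'' argument of $R_{a_i}$, and that these coincide with the $v_{a_i}$ notation used in the lemma statement. A secondary subtlety is arguing that each nested record type is enumerated \emph{exactly once} in $H$ (no duplicates, no omissions); this relies on the tree structure of non-recursive schemas, so that two distinct nesting paths cannot reach the same record-type name, and on the fact that \textsc{GenIntensionalPreds} recurses into each attribute exactly once.
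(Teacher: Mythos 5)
Your proposal is correct and follows essentially the same route as the paper's own proof: a structural induction over the record-type nesting with a strengthened hypothesis covering nested (non-top-level) records, dispatched by cases on the \textrm{InPrim}, \textrm{InRec}, and \textrm{InRecNested} rules. Your version merely makes explicit some bookkeeping (variable identities, exactly-once enumeration) that the paper leaves implicit.
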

\begin{proof}
Prove by structural induction on name $\name$.
\begin{itemize}
\item Base case.
$\name$ is a primitive attribute, i.e. $\schema'(\name) \in \emph{PrimType}$. According to the \textrm{InPrim} rule in Figure~\ref{fig:rules-intensional}, \textsc{GenIntensionalPreds} generates a variable $v_{\name}$ for $\name$ and no head relations.
\item Inductive case.
$\name$ is a record type with attributes $\name_1, \ldots, \name_m$, i.e. $\schema'(\name) = \set{\name_1, \ldots, \name_m}$. Suppose for every name $\name_j$ with $j \in [1, m]$, \textsc{GenIntensionalPreds} generates $R_{\name_j}(v_{a_1}, \ldots, v_{a_n})$ if $\name_j$ is a top-level record type and otherwise generates relation $R_{\name'_j}(v_{\name'_j}$, $v_{a_1}, \ldots, v_{a_n})$ for each nested record type $\name'_j$ in $\name_j$, we need to prove the property holds for $\name$. We perform a case split on name $\name$.
\begin{enumerate}
    \item[(1)] If $\name$ is a top-level record type. The property simply holds by the \textrm{InRec} rule in Figure~\ref{fig:rules-intensional}.
    \item[(2)] If $\name$ is not a top-level record type. The property holds by the \textrm{InRecNested} rule in Figure~\ref{fig:rules-intensional}.
\end{enumerate}
\end{itemize}
By the principle of structural induction, we have proved this lemma.
\end{proof}

\begin{lemma}[Correctness of GenExtensionalPreds] \label{lem:extensional-correctness}
Given a sequence of record types $\name_1, \ldots, \name_m$ in source schema $\schema$, where $\name_{i+1}$ is nested in $\name_i$ for $i \in [1, m-1]$ and $\name_1$ is at the top level, invoking \textsc{GenExtensionalPreds} $\!\!(\schema, \name_m)$ generates the following relations:
\begin{enumerate}
    \item relation $R_{\name_1}(h_{a_1}, \ldots, h_{a_n})$, where $h_{a_i}$ is the hole corresponding to attribute $a_i$ of $N_1$
    \item relation $R_{\name_j}(v_{\name_j}, h_{a_1}, \ldots, h_{a_n})$ for each $\name_j$ with $j \in [2, m]$, where $v_{\name_j}$ is the variable corresponding to record type $\name_j$, and $h_{a_i}$ is the hole corresponding to attribute $a_i$ of $N_j$
\end{enumerate}
\end{lemma}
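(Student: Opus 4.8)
The plan is to prove this by induction on $m$, the length of the nesting chain $N_1, \ldots, N_m$ (equivalently, on the nesting depth of $N_m$ in $\schema$, which is a well-founded measure since schemas are non-recursive). This parallels the argument for Lemma~\ref{lem:intensional-correctness}, except that the recursion of interest now climbs \emph{up} the parent chain rather than descending into subrecords. The first thing I would establish is that the rules of Figure~\ref{fig:rules-extensional} only ever recurse toward $\emph{parent}(N)$: when a child attribute $a_i$ is processed, only its hole or variable $h_i$ is retained and its body component is discarded, so no spurious relations are generated while handling the $a_i$, and exactly one predicate per record type on the chain is produced.

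For the base case $m = 1$, I would observe that $N_1 = N_m$ is top-level, so $\neg\emph{isNested}(N_m)$ holds and the only applicable rule is \textrm{ExRec}, which yields the singleton $\set{R_{N_1}(h_{a_1}, \ldots, h_{a_n})}$, with each $h_{a_i}$ the hole produced for $a_i$ by \textrm{ExPrim} (or a fresh variable when $a_i$ is itself a record type). This is exactly claim~(1), and claim~(2) holds vacuously. For the inductive step $m \geq 2$, $N_m$ is nested with $\emph{parent}(N_m) = N_{m-1}$, so \textrm{ExRecNested} applies: its conclusion contributes $R_{N_m}(v_{N_m}, h_{a_1}, \ldots, h_{a_n})$ with a fresh connector variable $v_{N_m}$, and its premise $\schema \vdash N_{m-1} \hookrightarrow (\_, B')$ is precisely a recursive call on the chain $N_1, \ldots, N_{m-1}$, which is itself a valid chain of length $m-1$ with $N_1$ at the top. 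Applying the induction hypothesis to this subchain gives $B' = \set{R_{N_1}(h_{a_1}, \ldots, h_{a_n})} \cup \set{ R_{N_j}(v_{N_j}, h_{a_1}, \ldots, h_{a_n}) \mid 2 \le j \le m-1 }$, and taking the union with the predicate for $N_m$ yields exactly the relations in (1) and (2) for the full chain.

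The step I expect to require the most care is the well-foundedness/termination argument: I must justify that the upward recursion in \textrm{ExRecNested} always terminates at a top-level record after finitely many steps and visits each $N_j$ exactly once, so that the generated set of predicates is precisely $\set{R_{N_j} \mid 1 \le j \le m}$ with the stated arities. This follows from the non-recursiveness of $\schema$ (the parent relation is acyclic and every chain of record types is finite), exactly as in the proof of Lemma~\ref{lem:intensional-correctness}; everything else is a routine case analysis on which of \textrm{ExPrim}, \textrm{ExRecNested}, \textrm{ExRec} fires.
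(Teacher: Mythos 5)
Your proof is correct and follows essentially the same route as the paper's: induction on the length $m$ of the nesting chain, with \textrm{ExRec}/\textrm{ExPrim} handling the base case and \textrm{ExRecNested} supplying the new predicate $R_{\name_m}(v_{\name_m}, h_{a_1}, \ldots, h_{a_n})$ on top of the inductive hypothesis applied to the subchain $\name_1, \ldots, \name_{m-1}$. The extra care you devote to termination of the upward recursion is a reasonable refinement but not a departure from the paper's argument.
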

\begin{proof}
Prove by induction on $m$.
\begin{itemize}
\item Base case.
$m = 1$. In this case, we only have one top-level record $\name_1$. According to the rules \textrm{ExRec} and \textrm{ExPrim} in Figure~\ref{fig:rules-extensional}, \textsc{GenExtensionalPreds} generates a relation $R_{\name_1}(h_{a_1}, \ldots, h_{a_n})$, where $h_{a_i}$ is the hole corresponding to attribute $a_i$ of $N_1$.
\item Inductive case.
Suppose the property holds for $m = k-1$, let us prove it holds for $m = k$. By inductive hypothesis, we know that \textsc{GenExtensionalPreds} generates a relation $R_{\name_1}(h_{a_1}, \ldots, h_{a_n})$ for $\name_1$ and generates $R_{\name_j}(v_{\name_j}, h_{a_{j_1}}, \ldots, h_{a_{j_n}})$ for each $j \in [2, k-1]$. According to rules \textrm{ExRecNested} and \textrm{ExPrim} in Figure~\ref{fig:rules-extensional}, it also generates a relation $R_{\name_k}(v_{\name_k}, h_{a_{k_1}}, \ldots, h_{a_{k_n}})$. Thus the property holds. 
\end{itemize}
By the principle of induction, we have proved this lemma.
\end{proof}

\begin{lemma}[Sketch of Rules] \label{lem:sketch-rule}
Given a source and target schema $\schema, \schema'$, an example $\ex = (\ein, \eout)$, and a top level record type $\name$ in $\schema'$, let $\attrMap = \textsc{InferAttrMapping}(\schema, \schema', \ex)$ and sketch $\sketch = \textsc{GenRuleSketch}(\attrMap, \schema, \schema', \name)$. If a Datalog rule $R$ satisfies the following conditions:
\begin{enumerate}
    \item every relation corresponding to a record type nested in $\name$ occurs exactly once in the head of $R$
    \item all body relations correspond to record types in $\schema$
    \item $\denot{R}_{\ein} = \eout_{\name}$ where $\eout_{\name}$ only contains the output for record type $N$ in $\eout$
\end{enumerate}
then there exists a program $R' \in \gamma(\sketch)$ such that $R' \simeq R$.
\end{lemma}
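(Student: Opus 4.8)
The plan is to prove the lemma in three stages: match the heads, reduce the body to a minimal form, and then assemble a completion $R'$ of $\sketch$ atom by atom.

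\textbf{Heads.} By Lemma~\ref{lem:intensional-correctness}, the head produced by $\textsc{GenIntensionalPreds}(\schema', \name)$ consists of exactly one intensional predicate for $\name$ and one for every record type nested in $\name$, over canonically chosen variables and with no holes. Hypothesis~(1) says the head of $R$ contains the same multiset of relation symbols, and since a rule's meaning is invariant under injective renaming (Theorem~\ref{thm:substitution}), I may assume without loss of generality that $R$ and every completion of $\sketch$ have literally the same head. So it remains to reconstruct the body of $R$, up to semantic equivalence, from a completion of the sketch body.

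\textbf{Minimising the body.} I would first replace $R$ by a semantically equivalent rule $R^\star$ obtained by the usual conjunctive-query core construction: repeatedly delete a body atom whenever a folding homomorphism of the body onto the rest witnesses it as redundant. This yields $R^\star \simeq R$ and a body no atom of which can be dropped. By hypothesis~(2) every atom of $R^\star$ is an occurrence of some source relation $R_M$, and by Lemma~\ref{lem:extensional-correctness} the body of $\sketch$ contains, for each primitive target attribute $a'$ of $\name$ and each source attribute $a$ with $a' \in \attrMap(a)$, a complete ancestor chain of extensional predicates down to the record $M$ containing $a$, with a hole in each primitive position whose domain is exactly the set of variables that $a$ may be identified with according to $\attrMap$ (plus the relevant head variables). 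The two facts I then need are: (i) $R^\star$ uses no more copies of any source relation than the sketch body provides; and (ii) every variable equality and every head binding realised by $R^\star$ can be reproduced by an assignment of the sketch holes that stays inside their domains. Claim~(ii) is where correctness of $R$ (hypothesis~(3)) combines with the deliberate conservativeness of the attribute mapping $\attrMap$ (which relates two attributes whenever their value sets on the example are contained): any two body positions that $R^\star$ equates must carry a common value in $\denot{R^\star}_{\ein} = \eout_\name$, forcing the corresponding attributes to be $\attrMap$-related; likewise a head variable bound by $R^\star$ to a body position forces that position's attribute into $\attrMap$ of the head attribute — and these are precisely the variables placed in the hole's domain by $\textsc{GenRuleSketch}$. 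For claim~(i), I would argue that each copy of $R_M$ surviving in the core either directly supplies a head variable or lies on a join path (through parent/child identifier variables) connecting head-supplying atoms, so that the number of surviving $R_M$-copies is bounded by the number of head attributes of $\name$ reachable through $M$, which is exactly the copy budget the sketch allocates.

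\textbf{Assembling $R'$.} Finally I would match each atom of $R^\star$ to a distinct, still-unused sketch copy of the same relation — possible by~(i) — and read off the hole assignment from the atom's arguments, which is legal by~(ii); for every leftover sketch copy I assign its holes so the atom becomes a syntactic duplicate of one already chosen. The result $R'$ lies in $\gamma(\sketch)$ by Definition~\ref{def:sketch-completion}, inherits well-formedness (every head variable occurs in the body) from $R^\star$, and is $R^\star$ together with duplicate atoms, hence $\denot{R'}_{\ein'} = \denot{R^\star}_{\ein'} = \denot{R}_{\ein'}$ for every input $\ein'$, i.e. $R' \simeq R$. I expect the real obstacle to be sub-claim~(i): bounding the number of relation copies in the minimised rule by the sketch's copy budget requires a careful combinatorial analysis showing that the parent/child identifier variables of the record schema force a correct non-recursive rule's body into a tree-shaped join structure mirroring exactly the chains emitted by $\textsc{GenExtensionalPreds}$. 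Sub-claim~(ii), by contrast, should follow fairly directly from the conservative, value-containment definition of $\attrMap$.
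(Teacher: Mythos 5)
Your treatment of the head coincides with the paper's: both invoke Lemma~\ref{lem:intensional-correctness} and an injective renaming justified by Theorem~\ref{thm:substitution} to make the heads of $R$ and of every completion of $\sketch$ literally identical. For the body you take a genuinely different route. The paper first posits a ``natural bound'' on the number of copies of each source relation (namely $|\set{a' \mid a' \in \emph{PrimAttrbs}(\name') \land a' \in \attrMap(a)}|$), asserts in a single sentence that any correct rule exceeding it ``boils down to renaming variables of existing predicates,'' and then inducts on the number of body atoms, matching each new atom to a sketch predicate and splitting on whether each of its variables is fresh, shared with the head, or shared only with other body atoms. Your core-minimisation-plus-matching argument replaces that induction: your claim~(ii) is exactly the paper's case split, driven by the same observation that $\denot{R}_{\ein} = \eout_{\name}$ together with the value-containment definition of \textsc{InferAttrMapping} forces every equality realised by the rule to appear in $\attrMap$ and hence in the hole domains built at lines 13--18 of Algorithm~\ref{algo:sketch-gen}; and your claim~(i) is the paper's natural-bound assertion in different clothing. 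What your version buys is a clean separation of the semantic step (core extraction certifies $R^\star \simeq R$ once and for all) from the syntactic step (hole assignment), and an explicit handling of the leftover sketch atoms, which the paper silently ignores even though Definition~\ref{def:sketch-completion} requires every hole to be filled. The caveat is that the step you defer as ``the real obstacle'' --- bounding the copies of each relation surviving in the minimised rule by the sketch's copy budget --- is not actually discharged by the paper either; neither argument rules out, for instance, a correct rule whose body is an irreducible join chain through a source relation that the sketch instantiates only once. So you have not missed an idea the paper supplies; you have reorganised the same argument and localised its one soft spot more honestly.
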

\begin{proof}
First of all, let us consider the head of rule $R$. Since the head of $R$ contains exactly one relation for each nested record in $\name$ and $\denot{R}_{\ein} = \eout_{\name}$, so the arguments of these relations should cover all the attributes in $\name$. In addition, considering the migration procedure in Section~\ref{sec:prelim-migration}, each nested record type should have an extra attribute to track its parent. Thus, the head of $R$ contains (1) a relation $R_{\name}$ for $\name$ and its arity is the number of attributes in $\name$, and (2) relation $R_{\name'}$ for each nested record type in $\name$ and its arity is one more than the number of attributes in $\name'$. By Lemma~\ref{lem:intensional-correctness}, we know the head of $\sketch$ is identical to the head of $R$ up to variable renaming, and let us denote the variable renaming by $\hat{\sigma}$.
Without loss of generality, suppose $\hat{\sigma}$ is enforced to be applied on $R$, i.e. the head variable for target attribute $a$ is renamed to $v_a$ and so does the variables in the body. Then the head of $R$ and $\sketch$ are syntactically identical, say $H$, and we will use $H$ as the head of $R'$.

Next let us focus on the body of $R$. Observe that the natural bound on the number of copies for one source attribute is the number of aliasing attributes in the source and target schema, i.e. the maximum number of copies for attribute $a$ in record type $\name'$ is bounded by $|\set{a' ~|~ a' \in \emph{PrimAttrbs}(\name') \land a' \in \attrMap(a)}|$. Since adding more copies beyond it would eventually boil down to renaming variables of existing predicates, we will prove there exists $R' \in \gamma(\sketch)$ such that $R' \simeq R$ under this natural bound by induction.
\begin{itemize}
\item Base case.
$R$ has only one relation in the body. Suppose $R$ is of the form $H(v_{a'_1}, \ldots, v_{a'_m}) \colondash B(v_1, \ldots, v_n)$ where $B$ corresponds to record type $\name_B$ in $\schema$ and $v_i$ is the variable for attribute $a_i$. Since $\denot{R}_{\ein} = \eout_{\name}$, for every attribute $a'_j$ in $\schema'$, there exists an attribute $a_k$ in $\schema$ such that $\Pi_{a_k}(\ein) = \Pi_{a'_j}(\eout_{\name})$. Thus, we have $a'_j \in \attrMap(a_k)$ based on the definition of \textsc{InferAttrMapping}.
Observing Lines 9 -- 12 of Algorithm~\ref{algo:sketch-gen} and combining with Lemma~\ref{lem:extensional-correctness}, we know that \textsc{GenRuleSketch} generates at least $m$ copies of predicate $B(h_1, \ldots, h_n)$. Then according to Lines 13 -- 18 of Algorithm~\ref{algo:sketch-gen}, for any attribute $a_i \in \dom(\attrMap)$, $a'_j$ is included in the domain of $h_i$. Furthermore, for any attribute $a_i \not\in \dom(\attrMap)$, a fresh variable $v_{a_i}^1$ is in the domain of $h_i$. Thus we can construct the body of $R'$ such that $R' \in \gamma(\sketch)$, by instantiating $h_i$ with $v_{a'_j}$ if $a_i \in \dom(\attrMap)$ and instantiating $h_i$ with $v_{a_i}^1$ if $a_i \in \dom(\attrMap)$. In this way, $R'$ is equivalent to $R$ up to variable renaming.
\item Inductive case.
Assuming there exists $R'_k \in \gamma(\sketch)$ with at most $k$ body predicates such that $R'_k \simeq R_k$ if $R_k$ is of the form $H \colondash B_1, \ldots, B_k$, let us prove there exists $R'_{k+1} \in \gamma(\sketch)$ with at most $k+1$ body predicates such that $R'_{k+1} \simeq R_{k+1}$ if $R_{k+1}$ is of the form $H \colondash B_1, \ldots, B_k, B_{k+1}$.
For the purpose of illustration, suppose $R_{k+1}$ is of the form $H(v_{a'_1}, \ldots, v_{a'_m}) \colondash B_1, \ldots$, $B_k, B_{k+1}(v_1, \ldots, v_n)$, where $B_i$ corresponds to record type $\name_i$ in $\schema$ for $i \in [1, k+1]$, and $v_j$ is the variable for attribute $a_j$ for $j \in [1, n]$. Note that $\name_{i_1}$ and $\name_{i_2}$ may refer to the same record type in general. Without loss of generality, we assume $N_{k+1}$ is record type $N_B$.
By inductive hypothesis, we know that $R'_k \simeq R_k$ with at most $k$ predicates in the body. To construct the new rule $R'_{k+1}$, we only need to instantiate one more predicate $B(h_1, \ldots, h_n)$ and add it to the body of $R'_k$, where $B$ is the relation corresponding to $N_B$. And such predicate sketch $B(h_1, \ldots, h_n)$ exists in the body of $\sketch$ by Lemma~\ref{lem:extensional-correctness}. To see how to construct the variable $v'_i$, let us consider the variable $v_i$ in $B_{k+1}(v_1, \ldots, v_n)$ and its corresponding attribute $a_i$ in schema $\schema$.
\begin{enumerate}
    \item[(1)] $v_i$ is a fresh variable. We can simply pick a variable $v_{a_i}^j$ that does not occur in $R'_k$.
    \item[(2)] $v_i$ occurs in the head and body predicates $B_1, \ldots, B_k$. Suppose the corresponding head attribute is $a'_r$, we know $\Pi_{a_i}(\ein) = \Pi_{a'_r}(\eout_N)$ from $\denot{R_{k+1}}_{\ein} = \eout_{\name}$. Based on the definition of \textsc{InferAttrMapping}, we have $a'_r \in \attrMap(a_i)$, so $v_{a'_r}$ is in the domain of $h_i$ according to Lines 9 -- 18 of Algorithm~\ref{algo:sketch-gen} and Lemma~\ref{lem:extensional-correctness}. Thus, we can use $v_{a'_r}$ to instantiate $h_i$.
    \item[(3)] $v_i$ does not occur in the head but occurs in body predicates $B_1, \ldots, B_k$. Suppose the corresponding source attributes are $a_{j_1}, \ldots, a_{j_r}$. Since $\denot{R_{k+1}}_{\ein} = \eout_{\name} = \denot{R_k}_{\ein}$, we have $\Pi_{a_{j_s}}(\ein) \subseteq \Pi_{a_i}(\ein)$ or $\Pi_{a_{j_s}}(\ein)$ $\supseteq \Pi_{a_i}(\ein)$ for all $s \in [1, r]$. So by the definition of \textsc{InferAttrMapping}, we have $a_i \in \attrMap(a_{j_s})$ or $a_{j_s} \in \attrMap(a_i)$. Thus, $v_{a_{j_s}}$ is in the domain of $h_i$ or $v_i$ is in the domain of the hole corresponding to attribute $a_{j_s}$. Therefore, we can instantiate these holes accordingly.
\end{enumerate}
So far, we have constructed the new rule $R'_{k+1}$ such that $R'_{k+1} \in \gamma(\sketch)$ and $R'_{k+1}$ is equivalent to $R_{k+1}$ up to variable renaming.
\end{itemize}
By the principle of induction, we have proved this lemma.
\end{proof}

\input{fig-sen-detail}

\begin{filecontents}{bench25-time.data}
a    b
2	5.862
3	4.354
4	5.393
5	4.445
6	3.828
7	4.241
8	3.83
\end{filecontents}

\begin{filecontents}{bench25-rate.data}
a    b
1	0
2	77
3	97
4	100
5	100
6	100
7	100
8	100
\end{filecontents}

\begin{filecontents}{bench26-time.data}
a    b
2	0.405
3	0.407
4	0.412
5	0.419
6	0.383
7	0.431
8	0.439
\end{filecontents}

\begin{filecontents}{bench26-rate.data}
a    b
1	0
2	99
3	99
4	100
5	100
6	100
7	100
8	100
\end{filecontents}

\begin{filecontents}{bench27-time.data}
a    b
2	4.603
3	1.725
4	1.725
5	1.771
6	1.813
7	1.892
8	2.572
\end{filecontents}

\begin{filecontents}{bench27-rate.data}
a    b
1	0
2	95
3	99
4	99
5	100
6	100
7	100
8	100
\end{filecontents}

\begin{filecontents}{bench28-time.data}
a    b
2	3.489
3	3.851
4	4.053
5	4.165
6	4.293
7	4.597
8	4.707
\end{filecontents}

\begin{filecontents}{bench28-rate.data}
a    b
1	0
2	91
3	99
4	100
5	99
6	100
7	100
8	96
\end{filecontents}

\begin{figure*}[!htb]
\centering
    \begin{subfigure}[b]{0.23\textwidth}
        \centering
        \begin{tikzpicture}[scale=0.7]
        \begin{axis}[
            width=2.2in,
            x label style={below=-5pt, font=\small},
            y label style={below=8pt, font=\small},
            axis y line*=left,
            xmin=0,
            xtick={0,2,...,11},
            ymin=0,
            ymax=10,
        	xlabel={\# Examples},
        	ylabel={Synthesis Time (s)},
        ]
        \addplot[color=blue,mark=*,width=2pt]
            table[x=a, y=b] {bench25-time.data};
        \label{legend:bench25-time}
        \end{axis}

        \begin{axis}[
            width=2.2in,
            legend cell align=left,
            legend style={at={(0.5,1.17)}, font=\small, legend columns=-1, anchor=north},
            y label style={below=160pt, font=\small},
            legend image post style={scale=0.5},
            hide x axis,
            axis y line*=right,
            xmin=0,
            xtick={0,2,...,11},
            ymin=0,
            ymax=110,
        	ylabel={Success Rate (\%)},
        ]
        \addlegendimage{/pgfplots/refstyle=legend:bench25-time}\addlegendentry{Time(s)}
        \addplot[color=red,mark=square*,width=2pt]
            table[x=a, y=b] {bench25-rate.data};
        \addlegendentry{Succ Rate(\%)}
        \end{axis}
    \end{tikzpicture}
    \vspace{-15pt}
    \caption{MLB-3}
    \vspace{-3pt}
    \end{subfigure}
    \begin{subfigure}[b]{0.23\textwidth}
        \centering
        \begin{tikzpicture}[scale=0.7]
        \begin{axis}[
            width=2.2in,
            x label style={below=-5pt, font=\small},
            y label style={below=8pt, font=\small},
            axis y line*=left,
            xmin=0,
            xtick={0,2,...,11},
            ymin=0,
            ymax=0.6,
        	xlabel={\# Examples},
        	ylabel={Synthesis Time (s)},
        ]
        \addplot[color=blue,mark=*,width=2pt]
            table[x=a, y=b] {bench26-time.data};
        \label{legend:bench26-time}
        \end{axis}

        \begin{axis}[
            width=2.2in,
            legend cell align=left,
            legend style={at={(0.5,1.17)}, font=\small, legend columns=-1, anchor=north},
            y label style={below=160pt, font=\small},
            legend image post style={scale=0.5},
            hide x axis,
            axis y line*=right,
            xmin=0,
            xtick={0,2,...,11},
            ymin=0,
            ymax=110,
        	ylabel={Success Rate (\%)},
        ]
        \addlegendimage{/pgfplots/refstyle=legend:bench26-time}\addlegendentry{Time(s)}
        \addplot[color=red,mark=square*,width=2pt]
            table[x=a, y=b] {bench26-rate.data};
        \addlegendentry{Succ Rate(\%)}
        \end{axis}
    \end{tikzpicture}
    \vspace{-15pt}
    \caption{Airbnb-3}
    \vspace{-3pt}
    \end{subfigure}
    \begin{subfigure}[b]{0.23\textwidth}
        \centering
        \begin{tikzpicture}[scale=0.7]
        \begin{axis}[
            width=2.2in,
            x label style={below=-5pt, font=\small},
            y label style={below=8pt, font=\small},
            axis y line*=left,
            xmin=0,
            xtick={0,2,...,11},
            ymin=0,
            ymax=6,
        	xlabel={\# Examples},
        	ylabel={Synthesis Time (s)},
        ]
        \addplot[color=blue,mark=*,width=2pt]
            table[x=a, y=b] {bench27-time.data};
        \label{legend:bench27-time}
        \end{axis}

        \begin{axis}[
            width=2.2in,
            legend cell align=left,
            legend style={at={(0.5,1.17)}, font=\small, legend columns=-1, anchor=north},
            y label style={below=160pt, font=\small},
            legend image post style={scale=0.5},
            hide x axis,
            axis y line*=right,
            xmin=0,
            xtick={0,2,...,11},
            ymin=0,
            ymax=110,
        	ylabel={Success Rate (\%)},
        ]
        \addlegendimage{/pgfplots/refstyle=legend:bench27-time}\addlegendentry{Time(s)}
        \addplot[color=red,mark=square*,width=2pt]
            table[x=a, y=b] {bench27-rate.data};
        \addlegendentry{Succ Rate(\%)}
        \end{axis}
    \end{tikzpicture}
    \vspace{-15pt}
    \caption{Patent-3}
    \vspace{-3pt}
    \end{subfigure}
    \begin{subfigure}[b]{0.23\textwidth}
        \centering
        \begin{tikzpicture}[scale=0.7]
        \begin{axis}[
            width=2.2in,
            x label style={below=-5pt, font=\small},
            y label style={below=8pt, font=\small},
            axis y line*=left,
            xmin=0,
            xtick={0,2,...,11},
            ymin=0,
            ymax=7,
        	xlabel={\# Examples},
        	ylabel={Synthesis Time (s)},
        ]
        \addplot[color=blue,mark=*,width=2pt]
            table[x=a, y=b] {bench28-time.data};
        \label{legend:bench28-time}
        \end{axis}

        \begin{axis}[
            width=2.2in,
            legend cell align=left,
            legend style={at={(0.5,1.17)}, font=\small, legend columns=-1, anchor=north},
            y label style={below=160pt, font=\small},
            legend image post style={scale=0.5},
            hide x axis,
            axis y line*=right,
            xmin=0,
            xtick={0,2,...,11},
            ymin=0,
            ymax=110,
        	ylabel={Success Rate (\%)},
        ]
        \addlegendimage{/pgfplots/refstyle=legend:bench28-time}\addlegendentry{Time(s)}
        \addplot[color=red,mark=square*,width=2pt]
            table[x=a, y=b] {bench28-rate.data};
        \addlegendentry{Succ Rate(\%)}
        \end{axis}
    \end{tikzpicture}
    \vspace{-15pt}
    \caption{Bike-3}
    \vspace{-3pt}
    \end{subfigure}

\vspace{10pt}
\caption{Detailed sensitivity analysis (continued)}
\label{fig:sen-detail-2}
\end{figure*}

\begin{theorem}[Sketch Property] \label{thm:sketch-prop}
Given a source and target schema $\schema, \schema'$, and an example $\ex = (\ein, \eout)$, let $\attrMap = \textsc{InferAttrMapping}(\schema, \schema', \ex)$ and sketch $\sketch = \textsc{SketchGen}$ $(\attrMap, \schema, \schema')$. If a Datalog program $\prog$ conforms to the syntax shown in Figure~\ref{fig:syntax-datalog} and $\prog$ satisfies the following conditions:
\begin{enumerate}
    \item for each record type $\name$ in $\schema'$, there is exactly one rule in $\prog$ such that its head corresponds to $\name$
    \item for each rule $R$ in $\prog$, the body relations of $R$ correspond to record types in $\schema$
    \item $\denot{\prog}_{\ein} = \eout$
\end{enumerate}
then there exists a program $\prog' \in \gamma(\sketch)$ such that $\denot{\prog'}_{\ein} = \eout$.
\end{theorem}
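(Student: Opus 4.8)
The plan is to lift the single-rule result of Lemma~\ref{lem:sketch-rule} to an entire program, exploiting the fact that both \textsc{SketchGen} (Algorithm~\ref{algo:sketch-gen}) and the hypothesized program $\prog$ are organized ``per top-level record type of $\schema'$''. First I would observe that $\sketch = \{\,\textsc{GenRuleSketch}(\attrMap,\schema,\schema',\name) \mid \name \text{ is a top-level record type of } \schema'\,\}$, and that the holes (and fresh variables) of the rule sketches for distinct $\name$ are disjoint; hence $\gamma$ distributes over this union, i.e.\ $\prog' \in \gamma(\sketch)$ precisely when $\prog' = \bigcup_{\name} R'_\name$ for some choice of completions $R'_\name \in \gamma(\textsc{GenRuleSketch}(\attrMap,\schema,\schema',\name))$. (Here I am tacitly extending Definition~\ref{def:sketch-completion} from a single sketch to a set of rule sketches component-wise.)

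Next, fix a top-level record type $\name \in \schema'$. Using hypothesis~(1), I would collect the (unique) rules of $\prog$ whose heads correspond to $\name$ and to each record type nested in $\name$, and merge them into a single grouped rule $R_\name$ in the shorthand notation of Section~\ref{sec:prelim-datalog} (several heads, one shared body). Then I would check that $R_\name$ meets the three premises of Lemma~\ref{lem:sketch-rule}: (i) by hypothesis~(1) every relation for a record nested in $\name$ occurs exactly once in the head of $R_\name$; (ii) by hypothesis~(2) all body relations correspond to source record types; and (iii) $\denot{R_\name}_{\ein} = \eout_\name$, the restriction of $\eout$ to the relations of $\name$ and its nested records. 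Part~(iii) follows from hypothesis~(3) together with the fact that the target schema is tree-structured, so each record has a unique parent and the intensional relations generated by rules for distinct top-level records are pairwise disjoint; since $\prog$ is non-recursive, $\eout$ decomposes as $\eout = \bigcup_\name \eout_\name$ with $\eout_\name = \denot{R_\name}_{\ein}$. Lemma~\ref{lem:sketch-rule} then yields $R'_\name \in \gamma(\textsc{GenRuleSketch}(\attrMap,\schema,\schema',\name))$ with $R'_\name \simeq R_\name$.

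Finally, set $\prog' = \bigcup_{\name} R'_\name$. By the first step $\prog' \in \gamma(\sketch)$, and because the program is non-recursive and rules for distinct top-level records neither share intensional relations nor feed one another, $\denot{\prog'}_{\ein} = \bigcup_\name \denot{R'_\name}_{\ein} = \bigcup_\name \denot{R_\name}_{\ein} = \bigcup_\name \eout_\name = \eout$, where the second equality uses $R'_\name \simeq R_\name$ and Definition~\ref{def:datalog-equiv}. This establishes the theorem.

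The main obstacle is the merging step: the theorem only requires $\prog$ to contain one rule per record type of $\schema'$, so the rules for a nested record and its parent may have \emph{different} bodies, whereas a rule sketch produced by \textsc{GenRuleSketch} forces all heads of one group to share a single body. One must argue this is without loss of generality --- since $\eout$ is a legal instance of the nested target schema, every child fact carries a parent identifier that must coincide with an actual parent fact (Section~\ref{sec:prelim-migration}), so the separate bodies can be amalgamated, introducing fresh variables as needed, into one body that still derives exactly $\eout_\name$ (equivalently, one can directly exhibit a shared-body completion in $\gamma(\sketch)$ computing the joint output). Threading the parent-identifier arguments correctly through this amalgamation, and verifying it neither enlarges nor shrinks any intermediate relation, is where the real work lies; everything else is bookkeeping on top of Lemma~\ref{lem:sketch-rule}.
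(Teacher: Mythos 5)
Your proposal takes the same route as the paper: decompose the sketch as the union of the per-top-level-record rule sketches produced by \textsc{GenRuleSketch} and invoke Lemma~\ref{lem:sketch-rule} componentwise. In fact you are more careful than the paper, whose entire proof is that the theorem ``directly follows'' from the lemma; the merging obstacle you flag at the end --- that the theorem permits one single-headed rule per (possibly nested) record type with independent bodies, whereas the lemma's premise requires all the nested heads to share one body deriving exactly $\eout_\name$ --- is a genuine gap that the paper's own argument silently elides.
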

\begin{proof}
Observe that the \textsc{SketchGen} procedure simply invokes the \textsc{GenRuleSketch} for each top-level record type $N$ in target schema $\schema'$. This theorem directly follows from Lemma~\ref{lem:sketch-rule}.
\end{proof}

\vspace{5pt}
\section{Interactive Mode}
As mentioned in Section~\ref{sec:impl}, \toolname can be used in an interactive mode, where the tool queries the user for more input-output examples if there are multiple Datalog programs that are consistent with the given example. When used in this mode, \toolname does not stop as soon as it finds a single program that satisfies the examples; rather, it keeps searching for additional programs (i.e., sketch completions) until either (a) it  finds another program that is also consistent with the given example, or (b) it proves that there is no other program consistent with the given example (i.e., the SAT encoding from Algorithm 1 becomes unsatisfiable). If there is another program consistent with the example, \toolname uses testing to find a smallest set of input tuples that distinguish the two programs. Specifically, \toolname  enumerates test inputs in increasing order of size where each tuple in the set comes from a validation set sampled from the source database instance. Then, given an input $\ein$ on which the outputs of the two programs differ, \toolname asks the user to provide the corresponding output.


\vspace{5pt}
\section{Sensitivity Analysis}

Our detailed sensitivity analysis for all benchmarks in Section~\ref{sec:eval} is summarized in Figures~\ref{fig:sen-detail} and~\ref{fig:sen-detail-2}. Here, the $x$-axis shows the input size $r$, and the $y$-axis shows both (a) the running time in seconds for each $r$ (the blue line with circles) and (b) the $\%$ of correctly synthesized programs given $r$ randomly-generated records (the red line with squares). To reduce the impact of random errors, we run \toolname $100$ times with different random examples for each $r$. Then, we drop outliers (i.e., fastest and slowest $10\%$) and compute the average  time for the remaining runs.

}\fi

\end{document}